\newtheorem{assumption}{Assumption}
\newtheorem{lem}{Lemma}
\theoremstyle{definition}
\newtheorem{defn}{Definition}
\newtheorem{rem}{Remark}
\newtheorem{theo}{Theorem}
\begin{document}
	\title{Turning Channel Noise into an Accelerator for Over-the-Air Principal Component Analysis}
	
	\author{{Zezhong~Zhang,~\IEEEmembership{Member,~IEEE,} Guangxu~Zhu,~\IEEEmembership{Member,~IEEE,} Rui~Wang,~\IEEEmembership{Member,~IEEE,} Vincent~K.~N.~Lau,~\IEEEmembership{Fellow,~IEEE,} and~Kaibin~Huang,~\IEEEmembership{Fellow,~IEEE,}
	
	\thanks{Manuscript received April 20, 2021; revised July 17, 2021 and December 2, 2021, and accepted March 21, 2022. The work described in this paper was substantially supported by a fellowship award from the Research Grants Council of the Hong Kong Special Administrative Region, China (Project No. HKU RFS2122-7S04).
	The work was also supported by Guang-dong Basic and Applied Basic Research Foundation under Grant 2019B1515130003, Hong Kong Research Grants Council under Grants 17208319 and 17209917, Innovation and Technology Fund under Grant GHP/016/18GD, Shenzhen Science and Technology Program under Grant JCYJ20200109141414409, and National Natural Science Foundation of China under grant 62171213. The work of Guangxu Zhu was supported in part by the National Key Research and Development Program of China under Grant 2018YFB1800800, in part by the National Natural Science Foundation of China under Grant 62001310.	
	The associate editor coordinating the review of this paper and approving it for publication was O. Tirkkonen. \emph{(Corresponding authors: Kaibin Huang and Rui Wang.)}}}
	
	\thanks{Z. Zhang and K. Huang are with The  University of  Hong Kong, Hong Kong (Email: \{zzzhang, huangkb\}@eee.hku.hk).} 
	\thanks{G. Zhu is with Shenzhen Research Institute of Big Data, China (Email: gxzhu@sribd.cn).} 
	\thanks{R. Wang is with  Southern University of Science and Technology, China (Email: wang.r@sustech.edu.cn).} 
	\thanks{V. K. N. Lau is with the The Hong Kong University of Science and Technology, Hong Kong (Email: eeknlau@ust.hk).}
	}
	\date{}
	\maketitle

\begin{abstract}

The enormous data distributed at the network edge  and ubiquitous connectivity have led to the emergence of the new paradigm   of distributed machine learning and large-scale data analytics. Distributed \emph{principal component analysis} (PCA) concerns finding a low-dimensional subspace that contains the most important information of  high-dimensional data distributed over the network edge. The subspace is useful for distributed data compression and feature extraction. This work advocates the application of over-the-air  federated learning to efficient implementation of   distributed PCA in a wireless network under a data-privacy constraint, termed \emph{AirPCA}.  The design features the exploitation of the waveform-superposition property of a multi-access channel to  realize \emph{over-the-air aggregation} of local subspace updates computed and simultaneously  transmitted by devices to a server, thereby reducing the multi-access latency. The original  drawback of this class of techniques, namely channel-noise perturbation to uncoded analog modulated signals, is turned into a mechanism for escaping from saddle points during  \emph{stochastic gradient descent} (SGD) in the AirPCA algorithm. As a result, the convergence of the AirPCA algorithm is accelerated. To materialize the idea, descent speeds in different types of descent regions are analyzed mathematically using martingale theory by accounting for wireless propagation and techniques including broadband transmission, over-the-air aggregation, channel fading and noise. The results reveal the accelerating  effect of noise in saddle regions and the opposite effect in other types of regions. The insight and results are applied to designing   an online scheme for adapting receive signal power to the type of current descent region. Specifically, the scheme amplifies the noise effect in saddle regions by reducing signal power and applies the power savings to suppressing the  effect  in other regions. From experiments using real datasets, such power control is found to accelerate convergence while achieving the same  convergence accuracy as in the ideal case of centralized PCA.

\end{abstract}

	\section{Introduction}
	
The enormous data distributed over edge  devices (e.g., smartphones and Internet-of-Things sensors) and ubiquitous connectivity have triggered the paradigm shift of distributed machine learning and large-scale data analytics \cite{FLSurvey}. As a standard technique in data analytics, \emph{principal component analysis} (PCA) provides a simple way of discovering a low-dimensional subspace, called \emph{principal components}, that contains the most important information of a high-dimensional  dataset \cite{Tut1}. This is  useful for  data compression, simplification of data description, and feature extraction. For these reasons, PCA finds  applications in almost all scientific fields ranging from wireless communication (see e.g.,  \cite{CPCA2,CPCA3}) to machine learning (see e.g., \cite{CPCA1,Eigenface}). A common approach of PCA is based on  \emph{singular-value decomposition} (SVD) of a data table, which comprises all data samples as rows. However, the required data centralization makes  this    approach   infeasible  for implementing PCA  in a mobile network  as uploading mobile data  violates their privacy and changes their ownerships. Addressing the issue has motivated researchers to apply \emph{federated learning} (FL) that preserves  data privacy to implementing distributed PCA, termed \emph{federated PCA} \cite{FPCA}. Federated PCA, or distributed PCA, can help compress and simplify the data distributed at the network edge, e.g., data generated by vehicular sensing or AR/VR applications and collected by different devices, for convenient storage and their further use in edge learning. As originally proposed for distributed learning, the FL framework involves devices in updating a prediction model using local data and uploading local updates (instead of data) to a server for aggregation to update the global model \cite{Decentralized}. In this way, the ``data privacy'' is preserved as elaborated in \cite{Decentralized} -- \emph{``Federated learning protects data ownership of devices by avoiding uploading raw data while providing a mechanism to leverage distributed mobile data. Specifically, a server requests each device to upload updates on the global model as computed using local training data. In general, the updates do not directly expose the content of local data and contain  much less information than the latter, thereby protect the users' data ownership.''} In this work, we propose an  efficient design of federated  PCA in a wireless system  based on \emph{over-the-air federated learning} which exploits the waveform-superposition property of a multi-access channel to realize low-latency over-the-air aggregation  \cite{BB,GX}. Targeting this design named  \emph{over-the-air PCA} (AirPCA), a power-control scheme is proposed to  adapt transmission power of devices to \emph{stochastic gradient descent} (SGD) such that channel noise is turned into an accelerator for the descent. 

As originally proposed in \cite{Iwen}, federated PCA  involves devices in computing their estimates of principal components via  SVD of their local data and uploading  their local estimates to a server for aggregation to obtain the global estimate, called as the one-shot method. There is a key drawback of the one-shot method that sharing the local principal components concerns partial data privacy. On the other hand, uploading full-SVD results leads to high communication latency when the number of devices grows large. By moderately reducing the dimensionality of the local subspace estimates, the communication latency issue is alleviated \cite{FPCA,DisEst,improvedPCA}. However, the dimension reduction on local subspace estimates results in a biased error, which distorts the global estimate when the local datasets are highly non-independent identically distributed. Another solution to federated PCA is to apply the well-known power method, which can be integrated with over-the-air aggregation to provide fast convergence and negligible communication latency \cite{OTAPowerMethod}. However, the power method is sensitive to the noise perturbation, making it infeasible in a wireless network, especially when the SNR is low. In view of the drawbacks of the existing methods, in this work we seek to apply SGD-based algorithms to solving federated PCA as an optimization problem of finding a subspace (principal components) to minimize the error function for data compression by projection onto the subspace. The above idea follows from the Oja's method \cite{Oja}, which solves centralized PCA using SGD-based algorithms. In the context of federated PCA, the main difficulty  for applying SGD  arises from the unitary/orthogonal  constraint of the optimization variable that is a subspace, which makes the optimization problem indecomposable. As elaborated in \cite{Decentralized}, FL cannot be directly applied for indecomposable optimization problems. The difficulty  can be overcome using the finding in \cite{SubspaceTracking} that the solution to the  unconstrained problem without the unitary/orthogonal constraint also solves the original constrained problem. In this work, we show that the SGD method is robust against channel noise. Moreover, with the presence of channel noise, we prove that the SGD method guarantees convergence to the global optimum through both analysis and simulations, which thus beats the power method. Moreover, by adopting over-the-air aggregation in the gradient uploading phase, the communication latency issue is also addressed, making the SGD algorithm outperform one-shot methods in \cite{FPCA,DisEst,improvedPCA} when the number of devices is large.

In a scenario with many devices and high-dimensional data, the uploading of local model updates from devices can cause a communication bottleneck for FL (including federated PCA) \cite{Decentralized}. Overcoming the bottleneck is a main research theme for  FL in wireless networks. A wide range of relevant techniques have emerged recently ranging from source encoding \cite{Mingzhe,Quek} to resource management \cite{Yuqing,Mingzhe2}, where energy efficient FL also attracts much attention \cite{EnergyEfficient1,EnergyEfficient2,QSEfficiency,EnergyEfficient3}. In particular, the mentioned over-the-air FL is a class of techniques that realize over-the-air aggregation by superimposing analog modulated model updates transmitted simultaneously by devices \cite{BB, Kwan,MeixiaTao,Bennis,Deniz,yuanmingShi,Dongzhu}. Compared with digital orthogonal access, over-the-air aggregation supporting simultaneous access has the advantage of reducing  the multi-access latency when the number of devices is large \cite{BB}. However, the uncoded analog transmission exposes the receive signals to the perturbation of channel noise that can potentially degrade the learning performance. In this work, we make an attempt on  turning  the drawback into an advantage in the context of AirPCA by exploiting the characteristics of the mentioned error function for AirPCA, which are described as follows. For training a model (e.g., a deep neural network) using  FL, the (prediction) loss function is dataset dependent and has no known expression. On the contrary, the PCA error function is well defined and its theoretical properties are well understood in the literature. To be specific, the error function has a finite number of stationary points comprising  a global optimum and a number of discrete saddle points \cite{SubspaceTracking}.  Consequently, the regions along the descent path belong to one of the three types: 1) a \emph{saddle region} centered at  an associated saddle point, 2) a \emph{non-stationary region} with relatively large slopes, and 3) an \emph{optimum region} centered at the global optimal point (see illustrations in Fig. \ref{ThreeRegions} in the sequel). The properties suggest that the gradient descent  can be trapped at a saddle point having a zero gradient  if the descent path encounters a saddle region. The problem is well known and a common solution is to  add artificial noise to gradients to escape from saddle points \cite{Ge15}. On the other hand, the noise slows down the descent outside saddle regions and reduces  the convergence accuracy. Instead of adding artificial noise,  we propose the idea of leveraging channel noise existing in received signals in AirPCA to help escape from saddle points by amplifying its effect but reducing its effect in other types of regions on the descent path. 

The idea is materialized in this work by designing region-adaptive power control for AirPCA. The main contributions  are summarized as follows.

\begin{itemize}
		\item {\bf Descent-Speed Analysis: }  Building on the martingale-based analytical approach for centralized PCA training in~\cite{Ge15}, we develop a new framework of  descent-speed analysis  for AirPCA. In light of prior work on distributed PCA assuming reliable links,  the novelty of the framework  lies in accounting  for wireless propagation and techniques, including \emph{orthogonal frequency division multiplexing} (OFDM), over-the-air aggregation, channel fading and noise.  The descent speed of AirPCA is measured by the reduction of the expected error function over a given number of communication rounds. Using the framework and exploiting the mentioned properties of the error  function,  the descent speeds in different regions on a descent path are characterized mathematically. Consider the gradient descent in a non-stationary region. A lower bound on the descent speed is derived and shown to be a monotone \emph{increasing} function of the expected receive \emph{signal-to-noise ratio} (SNR), which is uniform for all devices as a result of signal-magnitude alignment in over-the-air aggregation, and also the expected number of active devices in the presence of fading. In contrast, the descent speed in a saddle region is a monotone \emph{decreasing} function of these two variables as their reduction amplifies the noise effect and accelerates the escape from the saddle point. Last, it is proved that under the effect of channel noise, the descent path can eventually enter the optimum region in probability so long as the step-size is sufficiently small.

\item {\bf Region-Adaptive Power Control:} Based on the analytical result, a simple scheme for online power control is designed to adapt the uniform receive SNR to the type of current descent region by coordinating transmission power of devices. Thereby,  the gradient descent of AirPCA is accelerated. To be specific, when a saddle region is detected, the receive SNR is fixed at a minimum value to amplify the noise effect so that the descent path can escape from the saddle point. This results in power savings under an average power constraint. On the other hand, when either a non-stationary or the optimum region is detected, receive SNR is enhanced by either using up all power savings from preceding rounds in the current round, called \emph{one-shot saving spending}, or distributing  the savings over multiple  rounds using a diminishing geometric sequence with the common ratio  controlling the saving-dissipation speed, called \emph{gradual saving spending}.

\item {\bf Experimental Results:} The learning  performance of AirPCA is evaluated using experiments with several well-known real datasets, namely MNIST, CIFAR-10, and AR. The proposed region-adaptive power control is shown to be effective in escaping from saddle points and  accelerating AirPCA convergence with respect to the case with fixed receive power/SNR. At the same time, the proposed scheme achieves the convergence accuracy of centralized PCA. Moreover, it is found that the mentioned gradual saving spending can outperform the one-shot counterpart if the common ratio is optimized. The effects of other parameters such as the number of devices and the channel-truncation threshold are also investigated. 
	\end{itemize}
	
The reminder of the paper is organized as follows. The AirPCA system  is described  in Section \ref{SystemModel}. In Section \ref{convergenceAna}, descent speeds of  AirPCA are  analyzed. Based on the analytical results, the online scheme of region-adaptive power control  is designed  in Section \ref{PowerControl}. Experimental results are presented in Section \ref{Sim}, followed by concluding remarks in Section \ref{Conclusion}.
\begin{figure*}[t]
	\centering
	\includegraphics[height=120pt]{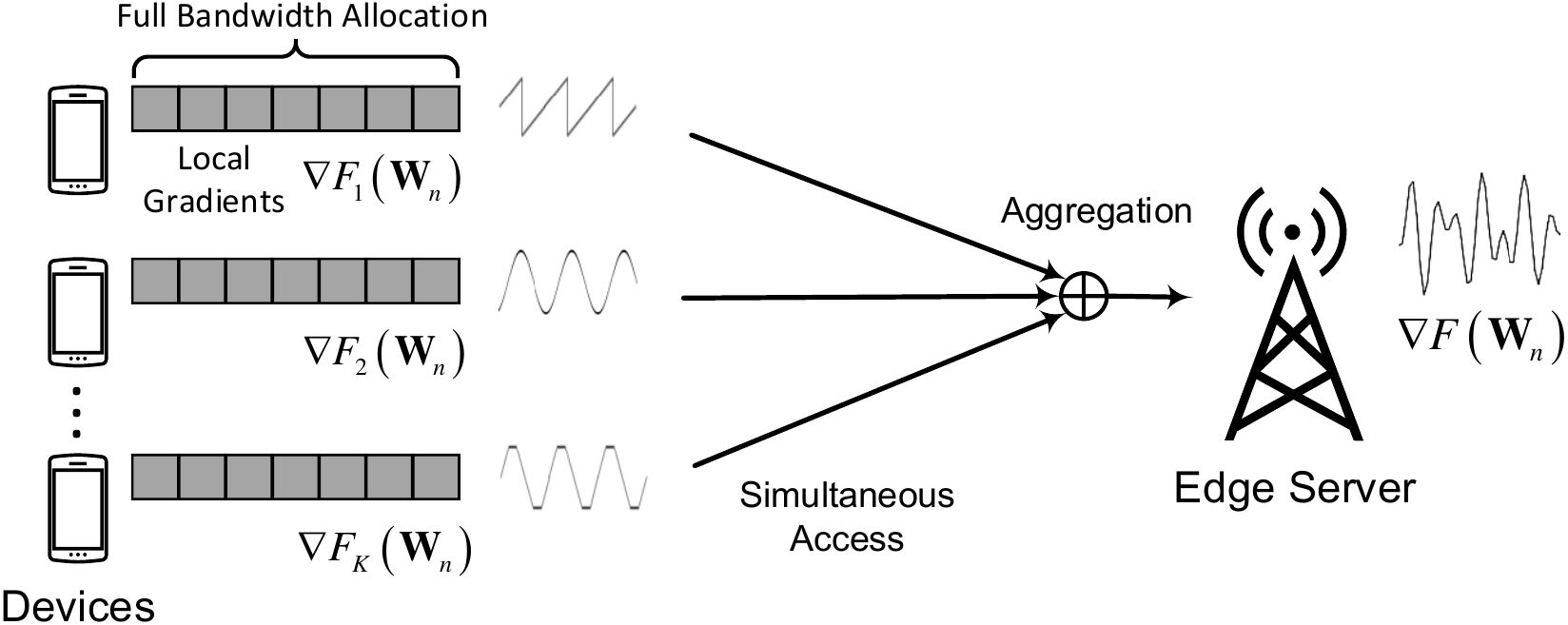} \caption{Broadband AirPCA system.}\vspace{-0.3cm}
\end{figure*}	
	
	\section{Over-the-Air PCA System}\label{SystemModel}
	In this section, we present the model of a broadband AirComp system, design the distributed PCA algorithm, and describe its implementation in the system.

	\subsection{Over-the-Air Aggregation  System}\label{broadband}
	We consider the  broadband over-the-air aggregation  system as proposed in \cite{BB} to support  AirPCA. In the system, there are $K$ devices communicating with a single server. The communication comprises multiple rounds,  each of which is divided into an  uplink and a  downlink transmission phases.  Consider the uplink phase of an arbitrary round. Each device transmits a fixed number, denoted as $c$, of symbols to the server over $M$ (frequency) sub-channels generated by OFDM. To this end, $c$ symbols are divided into $\frac{c}{M}$ blocks. Each block is transmitted in one OFDM symbol duration with  each sub-channel modulated with one symbol using linear analog modulation. The transmission of all devices is simultaneous so as to realize over-the-air aggregation. Then the $i$-th aggregated symbol received by the server  in the $n$-th communication round, denoted as $y^{(i)}_n$, is given as 
	\begin{equation}\label{Eq:RxSymb}
	y^{(i)}_n = \sum\limits_{k=1}^{K} h_{k,n}^{(i)}p_{k,n}^{(i)}s_{k,n}^{(i)}+z^{(i)}_n, \qquad 1\le i\le M, n \ge 1,
	\end{equation}
	where $s_{k,n}^{(i)}$ denotes the symbol transmitted by device $k$ with $\mathsf{E}[|{s}_k^{(i)}|^2]=1$,  the Gaussian random variables $h_{k,n}^{(i)}\sim\mathcal{CN}(0,1)$ and $z^{(i)}_n\sim\mathcal{CN}(0,\sigma^2)$ represent the  gain and noise of the corresponding sub-channel, respectively, and $p_{k,n}^{(i)}$ is the precoding  coefficient. Let $P_{k, n}$ denote the power consumption by the broadband transmission of device $k$ in round $n$: $P_{k, n} =\sum\limits_{i=1}^M |p_{k,n}^{(i)}|^2$.  The transmission of each device is subject to an average power constraint:  
	\begin{equation}\label{powerConstraint}
	\mathsf{E}[P_{k, n}] = \mathsf{E}\left[ \sum\limits_{i=1}^M |p_{k,n}^{(i)}|^2\right]  \le \bar{P},
	\end{equation}
	for a given constant $\bar{P}$.  
	
	\begin{figure*}[t]
		\begin{minipage}[c]{1\linewidth}
			\centering
			\begin{center}
				\includegraphics[height=64pt]{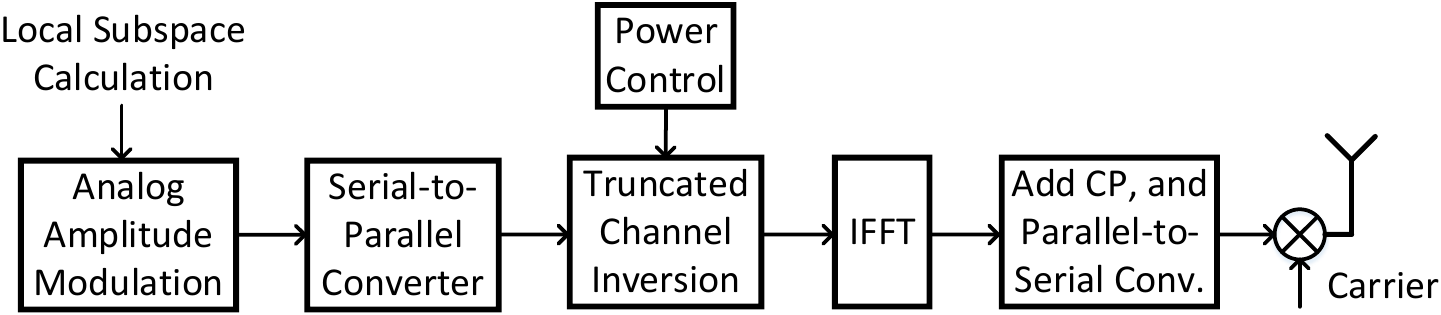}
			\end{center}			
			\caption*{(a) Transmitter design for edge devices.}
		\end{minipage}
		\begin{minipage}[c]{1\linewidth}
			\centering
			\begin{center}
				\includegraphics[height=69pt]{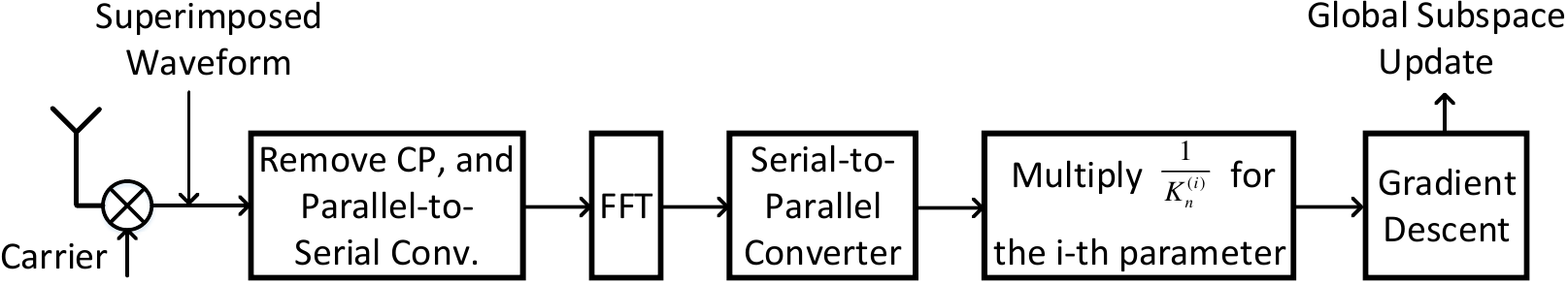}
			\end{center}
			\caption*{(b) Receiver design for the edge server.}
		\end{minipage}
		\caption{Transceiver design of the AirPCA system.}\vspace{-0.3cm}
	\end{figure*}

	Over-the-air aggregation requires channel inversion so that each received symbol is the desired sum of transmitted symbols. We adopt one existing scheme that is designed to satisfy  the average power constraint called  \emph{truncated channel inversion} \cite{BB,1Bit}. Specifically, the precoding  coefficient in \eqref{Eq:RxSymb} is given as 
	\begin{align}\label{truncation}
	p_{k,n}^{(i)} = \left\{ \begin{aligned}
	&\frac{{\sqrt {P^{\text{rx}}_n} }}{{h_{k,n}^{(i)}}},& {\left| {h_{k,n}^{(i)}} \right|^2} \geq {G},  \\
	&0,& \ \ {\left| {h_{k,n}^{(i)}} \right|^2} < {G}, \\ 
	\end{aligned}  \right.
	\end{align}
	where the controllable receive power $P^{\text{rx}}_n$ and constant $G$ are called signal-magnitude-alignment factor and truncation threshold, respectively, as explained in the following. The factor $P^{\text{rx}}_n$, which scales magnitude of an aggregated symbol at the receiver, forms a \emph{power-control sequence $\{P^{\text{rx}}_n\}$} in the entire process  controlling the receive power under the constraint in \eqref{powerConstraint}. Given identical distributions of sub-channel gains, it can be obtained that~\cite{BB}
	\begin{align}\label{powerResult}
	\mathsf{E} [P^{\text{rx}}_n] \le \frac{\bar{P}}{M \mathsf{Ei}(G)}=\bar{P}^{\text{rx}}_\text{max},
	\end{align}
	where $\mathsf{Ei}(G) \!\!\triangleq\!\! \int_G^\infty \frac{1}{t}\exp(-t)dt$ is the exponential integral function. On the other hand, the truncation threshold $G$ avoids excessive power consumption due to inversion of deeply faded sub-channels. To enforce fixed transmission latency, the symbols assigned to truncated sub-channels are discarded. The probability that a sub-channel avoids truncation  (or equivalently  its symbol is transmitted) is called  \emph{activation probability} and denoted by $\zeta^\text{act}$. It is easily  obtained as 
	\begin{align}\label{nonTruncation}
	\zeta^\text{act} = \Pr(|h_{k,n}^{(i)}|^2 \ge G) = e^{-G}.
	\end{align}
	The value  $\zeta^\text{act}$ reflects the reliability of a wireless channel. 
	
	After receiving the aggregated message, the server updates the global model and further broadcasts it in the downlink, which is identical to all devices. As transmit power and bandwidth are usually large for broadcasting, we consider it as the high SNR condition and neglect the distortion during broadcasting in the downlink.

	\begin{rem}[Outage Effect]
		It is possible that some devices disconnect from the server occasionally in practice, which is called as the \emph{outage effect}.  We consider disconnection as a special case of the channel-truncation, where all sub-channels are truncated. Moreover, when a device in outage reconnects to the server, it first receives the latest subspace broadcast from the server, and then continues to compute the local gradient and joins  the AirPCA again.
 	\end{rem}

	\subsection{Distributed PCA Problem and Algorithm}
	\subsubsection{Distributed PCA Problem} We assume a global dataset comprising $L$  samples is uniformly distributed over the $K$ devices. Let $\mathcal{D}_k$ denote local dataset of device $k$ generated by uniformly sampling the global dataset. The local datasets have a uniform size: $|\mathcal{D}_k| = \ell_0$ where $L = K \ell_0$. In this work we assume that the local datasets are acquired in advance and do not vary within the processing duration, which is a common setting adopted in \cite{Tut1,Iwen}. The distributed PCA problem is to find a low-dimensional subspace of the data space, called \emph{principal components},  to compress the distributed dataset under the criterion of minimum distortion. Let $d$ and $D$ with $D \gg d$ denote the dimensions of the principal components and data space, respectively. Let  the  $i$-th sample  be denoted as $\mathbf{x}_i\in \mathbb{R}^{D \times 1}$.  Moreover, $d$-dimensional principal components are represented by the unitary/orthogonal real matrix $\mathbf{W} \in \mathbb{R}^{D \times d}$. The sample   $\mathbf{x}_i$ can be approximated  using its projection onto the subspace, ${\mathbf{W}^T\mathbf{x}_i}$,  as $\mathbf{W}{\mathbf{W}^T\mathbf{x}_i}$. To minimize the approximation error, the distributed PCA problem can be formulated as: 
	\begin{align*}
	(\mathbf{P1}) \qquad \qquad \min_{\mathbf{W}}& \quad \frac{1}{L}\sum_{k=1}^K \sum_{i\in \mathcal{D}_k} \left\| \mathbf{x}_i - \mathbf{W}{\mathbf{W}^T\mathbf{x}_i} \right\|_2^2, \\
	\text{s.t.} & \quad \mathbf{W}^T\mathbf{W} = \mathbf{I},
	\end{align*}
	where $\mathbf{x}_i\!\in\! \mathbb{R}^{D \times 1}$ is a data sample, with $\mathbf{X} \!\in\! \mathbb{R}^{D \times L}$ as the aggregation.
	If all devices can upload their local data to the server,  Problem (${\mathbf{P1}}$) can be solved by applying SVD on the centralized dataset $\mathbf{X} \!=\! [\mathbf{x}_1, \mathbf{x}_2, \cdots, \mathbf{x}_L]$. However, for the distributed PCA scenario, direct data uploading is infeasible under  the data-privacy constraint. A different SGD-based solution is described as follows. 
	
	\begin{figure*}[t]\vspace{-0.5cm}
		\centering
		\includegraphics[height=110pt]{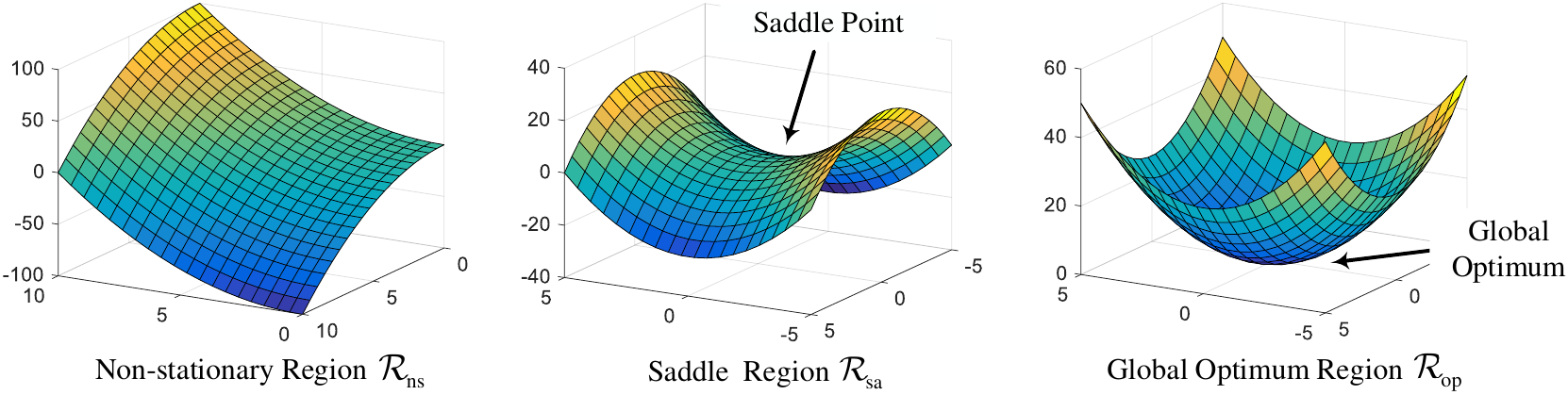} \caption{Three types of regions in a strict saddle function.}\label{ThreeRegions}\vspace{-0.3cm}
	\end{figure*}

	\subsubsection{Distributed PCA Algorithm}\label{Section:SGD} For ease of notation, let the objective function of Problem (${\mathbf{P1}}$) be denoted as 
	\begin{equation}
	F(\mathbf{W}) = \frac{1}{L}\sum_{i=1}^{L} \left\| \mathbf{x}_i - \mathbf{W}{\mathbf{W}^T\mathbf{x}_i} \right\|_2^2. 
	\end{equation}
	As  proved in \cite{SubspaceTracking},  $F(\mathbf{W})$ has stationary points in the form of $\mathbf{W} = \mathbf{U}_{d} \mathbf{Q}$, where the column vectors of $\mathbf{U}_{d} \in \mathbb{R}^{D \times d}$ are    $d$ distinct eigenvectors of the covariance matrix $\mathbf{R}=\mathbf{X}\mathbf{X}^T$ and $\mathbf{Q} \in \mathbb{R}^{d \times d}$ is an arbitrary unitary matrix.  If the Hessian matrix $\mathcal{H}(\mathbf{W}) =\nabla^2 F(\mathbf{W})$ has both positive and negative eigenvalues, then $\mathbf{W}$ is called a \emph{saddle point}. As  further proved  in \cite{SubspaceTracking}, all stationary points of $F(\mathbf{W})$ are saddle points, except for one where  $\mathbf{U}_{d}$ contains the $d$ dominant eigenvectors of $\mathbf{R}$. This point yields  the global minimum of $F(\mathbf{W})$. The above properties suggest that $F(\mathbf{W})$ comprises three types of region as illustrated in Fig. \ref{ThreeRegions}. Then the gradient-descent algorithm can be effective in solving the following optimization problem, which  is a simplified version of (${\mathbf{P1}}$) without its unitary/orthogonal constraint: 
	\begin{align}\label{groundtruth}
	(\mathbf{P2}) \qquad  \min_{\mathbf{W}}& \quad F(\mathbf{W}), \nonumber
	\end{align}
	if  the descent process can avoid being trapped at saddle points. A standard approach of escaping from a saddle point is to add artificial noise into the gradients \cite{Ge15}.  Then the column space of the optimal point, $\mathbf{W}^\star$, solves Problem (${\mathbf{P1}}$).

	As a special case of FL, the iterative algorithm of distributed PCA is based on SGD \cite{SGD}. To describe the algorithm, consider an arbitrary communication round of the algorithm. At its beginning, the server broadcasts the current principal components $\mathbf{W}$ to all devices for computing gradients based on all local data samples. To this end, the local objective function of device $k$ is given as $F_k(\mathbf{W}) = \frac{1}{\ell_0} \sum\limits_{i\in\mathcal{D}_k} \left\| \mathbf{x}_i - \mathbf{W}\mathbf{W}^T\mathbf{x}_i \right\|_2^2$. Moreover, define the data covariance matrix at device $k$ as $\mathbf{R}_k = \mathbf{X}_k\mathbf{X}_k^T$, where  the $D\times \ell_0$ matrix $\mathbf{X}_k$ comprises samples in the local dataset $\mathcal{D}_k$. 
	Then the local  gradient,  $F_k(\mathbf{W})$, is computed at device $k$ as 
	\begin{equation}\label{localGra}
	\nabla F_k(\mathbf{W})  = \frac{2}{\ell_0}\left[-2{\mathbf{R}_k}+{\mathbf{R}_k}\mathbf{W}\mathbf{W}^T+\mathbf{W}\mathbf{W}^T{\mathbf{R}_k} \right]\mathbf{W}.
	\end{equation}
	The devices  upload their local gradients to the server for aggregation and then updating the principal components $\mathbf{W}$. Note that the gradient of the global objective function $F(\mathbf{W})$ can be written in terms of local gradients as 
	\begin{align}\label{pureGra}
	\nabla F(\mathbf{W}) = \frac{1}{K}\sum_{k=1}^K \nabla F_k(\mathbf{W}).
	\end{align}
    However, the  received gradient is purposely perturbed by noise to escape from saddle points \cite{Ge15}: 
	\begin{equation}\label{noisyGlobal}
	\nabla \widehat{F}(\mathbf{W}) = \frac{1}{K}\sum\limits_{k=1}^K \nabla F_k(\mathbf{W}) + \mathbf{z},
	\end{equation}
	where $\mathbf{z}$ is a random vector  representing  noise. Then the principal components in the current round (say round $n$),  $\mathbf{W}_{n}$, are updated by the server: 
	\begin{equation}\label{Eq:Update}
	\mathbf{W}_{n+1} = \mathbf{W}_{n} - {\mu}\nabla \widehat{F}(\mathbf{W}_n),
	\end{equation}
	where ${\mu}$ is a fixed step-size. The above per-round procedure is repeated until $\mathbf{W}$ converges. 
			
	\subsection{AirPCA Implementation}
AirPCA implements  distributed PCA in an over-the-air aggregation system. The implementation of the $n$-th round is described as follows. To facilitate transmission over both in-phase and quadrature channels, the local and global gradients (matrices),  $\nabla F_k(\mathbf{W})$  and $\nabla F(\mathbf{W})$, are complex vectorized with mapping functions $g_k(\cdot)$ and $g(\cdot)$, where the resultants are denoted as $g_k(\mathbf{W})=\mathsf{vec}[\nabla F_k(\mathbf{W})]$ and $g(\mathbf{W})=\mathsf{vec}[\nabla F(\mathbf{W})]$, each comprising $c=\frac{D\times d}{2}$ elements. Given i.i.d. data distribution over devices, the following assumption of unbiased estimation is common in the literature of distributed learning and estimation (see e.g., \cite{Bertsekas,SGD}). 

\begin{assumption}[Unbiased Estimation] \emph{The local gradient computed at each device is an unbiased estimate of the global gradient: 
	\begin{equation}
	g_k(\mathbf{W}) = g(\mathbf{W}) + \boldsymbol{\Delta}_k,\qquad 1 \le k \le K, 
	\end{equation}
where the estimation error vector, $\boldsymbol{\Delta}_k$, is called \emph{data noise} and satisfies 	
	\begin{equation}\label{Eq:DataNoise}
\mathsf{E}[\boldsymbol{\Delta}_k] = \frac{1}{K}\sum_{k=1}^K\boldsymbol{\Delta}_k=\mathbf{0}, \quad \mathsf{E}[\boldsymbol{\Delta}_k\boldsymbol{\Delta}_k^H] \le \kappa^2 \mathbf{I},
	\end{equation}
for a given constant 	$\kappa^2$. 
}	
\end{assumption}	
Note from \eqref{Eq:DataNoise} that the data  noise $\{\mathbf{\Delta}_k\}$ at different devices are correlated.

To realize over-the-air aggregation, each device transmits its local gradient using linear analog modulation. Following the model in \cite{BB,ZY} for  i.i.d. data distribution, the symbols at device $k$, namely the  elements of the local  gradient $g_k(\mathbf{W})$, can be modeled as identically distributed random variables with mean $\eta$ and variance $\nu^2$;  the statistics are identical for all devices and are  known by them. To facilitate  power control in \eqref{powerConstraint}, each symbol that is not truncated   is normalized to have  zero mean and unit variance, i.e., ${\bf s}_{k,n} = \frac{g_k(\mathbf{W}_n)-\eta}{\nu}$ in the $n$-th round, and then transmitted over a sub-channel; otherwise, a symbol $0$ is transmitted. Being synchronized in time (using i.e., timing advance in 3GPP) and using truncated channel inversion in \eqref{truncation}, all devices simultaneously transmit their OFDM symbols with aligned boundaries to perform over-the-air aggregation. This yields the symbol vector as received by the server as 
	\begin{align}\label{ReceiveSig}
	\mathbf{y}_n = \sum_{k=1}^K \sqrt{{P^{\text{rx}}_n}} {\bf s}_{k,n}+\mathbf{z}_n.
	\end{align}
Then the received symbols are  de-normalized to give the elements of the noisy global gradient, denoted as $\widehat{g}(\mathbf{W}_n)$,  as 
	\begin{align}\label{NG}
	[\widehat{g}(\mathbf{W}_n)]_i &=  \frac{1}{K_n^{(i)}}\frac{\nu}{\sqrt{P^{\text{rx}}_n}} [\mathbf{y}_n]_i +\eta,  
	\end{align}
where $K_n^{(i)}$ is defined to be the number of devices transmitting  the $i$-th gradient element in the $n$-th round, with $\mathcal{K}_n^{(i)}$ denoting the set of devices, that is, $|\mathcal{K}_n^{(i)}| = K_n^{(i)}$. The number  follows a binomial distribution,  $K_n^{(i)} \sim B(K,\zeta^\text{act})$,  with $\zeta^\text{act}$ being the activation probability in \eqref{nonTruncation}. Equation  \eqref{NG} implies that $K_n^{(i)} $ is nonzero. This is reasonable since  $\Pr(K_n^{(i)} = 0) = (1-\zeta^\text{act})^{K}$, which is close to zero  when $\zeta^\text{act}$ is close to one and/or $K$ is large. The substitution of the normalization equation and \eqref{ReceiveSig} into \eqref{NG} gives the noisy global gradient as received by the server as 
	\begin{align}\label{FianlGradient}
	\widehat{g}(\mathbf{W}_n) = {g}(\mathbf{W}_n)+ \boldsymbol{\xi}_n,
	\end{align}
where the noise vector $\boldsymbol{\xi}_n$ combines channel and data noise and is defined element-wise as 
	\begin{align}\label{noise}
	[\boldsymbol{\xi}_n]_i =  \frac{1}{{K}_n^{(i)}}\bigg( \frac{\nu}{\sqrt{P^{\text{rx}}_n}}[\mathbf{z}_n]_i + \sum_{k\in\mathcal{K}_n^{(i)}}[\boldsymbol{\Delta}_{k}]_i\bigg), \quad 1\le i\le M.
	\end{align} 
	By de-vectorizing  $\widehat{g}(\mathbf{W}_n)$ in \eqref{FianlGradient} into the matrix $\widehat{F}(\mathbf{W}_n)$, the principal components are updated as in \eqref{Eq:Update}, completing the $n$-th round of AirPCA. 
	
	\section{Convergence Analysis for AirPCA}\label{convergenceAna}
In this section, the convergence of AirPCA is quantified in terms of descent speeds in different types of regions (see Fig. \ref{ThreeRegions}) and convergence accuracy. The results are useful for designing power control in the next section. 

	\subsection{Definitions and Assumptions}
For tractable analysis, several definitions and assumptions are given as follows. First, as discussed, the  objective function $F(\mathbf{W})$ of the PCA problem in ($\mathbf{P1}$) contains discrete saddle points, one global optimum without  local optimums. Such a function belongs to the family of  strict saddle functions defined as follows \cite{Ge15,StrictSaddle2}. 	
	\begin{defn}[Strict Saddle Function]\label{def1}
		A twice-differentiable function $F(\mathbf{W})$ is called $(\alpha,\gamma,\epsilon,\delta)$-strict saddle if for any point $\mathbf{W}$, at least one of the following is true
		\begin{enumerate}[1.]
			\item $\|\nabla F(\mathbf{W})\|\ge \epsilon$.
			\item Consider the Hessian matrix $\mathcal{H}(\mathbf{W}) =\nabla^2 F(\mathbf{W})$. Its minimum eigenvalue $\lambda_{\min}(\mathcal{H}(\mathbf{W})) \le -\gamma$ for some positive constant  $\gamma$. 
			\item Let $\mathbf{W}^\star$ be the point of global minimum of  $F(\mathbf{W})$ and $\delta$ and $\alpha$ given  positive constants.  In the $\delta$-neighbourhood $\{ \mathbf{W}\in \mathbb{R}^{D\times d}: \|\mathbf{W}-\mathbf{W}^\star\| \le \delta\}$, the function $F(\mathbf{W})$ is $\alpha$-strongly convex, i.e.,  $\lambda_{\min}(\mathcal{H}(\mathbf{W})) \ge \alpha$.			
		\end{enumerate}
	\end{defn}
The above definition allows the three types of regions of  $F(\mathbf{W})$ as illustrated in Fig. \ref{ThreeRegions} to be defined mathematically as follows. 

\begin{defn}[Region Types] A region of  $F(\mathbf{W})$ belongs to one of the following three types. 
\begin{itemize}
\item A  \emph{non-stationary region} [see Fig. \ref{ThreeRegions}(a)], denoted as $\mathcal{R}_\text{ns}$,  is one where condition 1) holds and thus can be defined as $\mathcal{R}_\text{ns} = \{\mathbf{W}\in\mathbb{R}^{D\times d}: \|\nabla F(\mathbf{W})\|\ge \epsilon\}$.
\item A  \emph{saddle region} [see Fig. \ref{ThreeRegions}(b)], denoted as $\mathcal{R}_\text{sa}$, is one where both conditions 1) and  2) hold and thus can be defined as  $\mathcal{R}_\text{sa} = \{\mathbf{W}\in\mathbb{R}^{D\times d}: \|\nabla F(\mathbf{W})\| < \epsilon; \lambda_{\min}(\mathcal{H}(\mathbf{W})) \le -\gamma \}$. 
\item A  \emph{global optimum region} [see Fig. \ref{ThreeRegions}(c)], denoted as $\mathcal{R}_\text{op}$, is one where condition 3) holds and thus can be defined as $\mathcal{R}_\text{op} = \{ \mathbf{W}\in \mathbb{R}^{D\times d}: \|\mathbf{W}-\mathbf{W}^\star\| \le \delta; \lambda_{\min}\mathcal{H}(\mathbf{W}) \ge \alpha\}$.
\end{itemize}
\end{defn}

For tractability, we make several typical assumptions on  $F(\mathbf{W})$ that introduce additional properties that usually  hold in practice (see e.g.,  \cite{Ge15}). 
\begin{assumption}\emph{The function $F(\mathbf{W})$ has several additional properties: 
	\begin{enumerate}
		\item (Boundedness) Both the function $F(\mathbf{W})$ and its gradient norm are  bounded: $\|F(\mathbf{W})\|\le B$ and   $\|g(\mathbf{W})\|\le C$ for all $\mathbf{W}$ and some constants $B$ and $C$. 
		\item (Smoothness) The function $F(\mathbf{W})$ is $\beta$-Lipschitz smooth:
		\begin{align}
		\|g(\mathbf{W}_1)-g(\mathbf{W}_2)\| \le \beta \|\mathbf{W}_1 - \mathbf{W}_2\|
		\end{align}
		for some positive constant $\beta$. 
		\item (Hessian smoothness) The Hessian of $F(\mathbf{W})$, $\mathcal{H}(\mathbf{W}) = \nabla^2 F(\mathbf{W})$,  is $\chi$-Lipschitz smooth:
		\begin{align}\label{Hessian}
		\|\mathcal{H}(\mathbf{W}_1)-\mathcal{H}(\mathbf{W}_2)\| \le \chi \|\mathbf{W}_1 - \mathbf{W}_2\|, 
		\end{align}	
		for some positive constant $\chi$. 
	\end{enumerate}
}
\end{assumption}
	
	\subsection{Characterizing Gradient Descent in Different Regions}\label{ConvergenceAnalysis}

	\subsubsection{Descent in non-stationary regions} The descent speed is measured by the expected reduction on the error function, termed the expected error reduction, over a given number of rounds. The descent speed in a non-stationary region is related to the receive signal power as well as other parameters as follows.

	\begin{theo}[Descent Speed in a Non-Stationary Region]\label{Compare} 
		Consider $n$-round gradient descent in  a non-stationary region, $\mathcal{R}_\text{ns}$, with the corresponding principal-component states $\left.\{\mathbf{W}_0, \dots, \mathbf{W}_{n-1} \}\subset \mathcal{R}_\text{ns}\right.$ and receive power controlled to be $\{P^\text{rx}_0, \dots, P^\text{rx}_{n-1}\}$. If the  step-size $\mu \le \frac{1}{\beta}$ with $\beta$ specifying the error-function smoothness, the expected error reduction  over the $n$ rounds can be lower bounded as 
		\begin{align}\label{CompareRegion1}
		 \!\!\!\!\mathsf{E}\left[F(\mathbf{W}_{0})\!-\!F(\mathbf{W}_{n})\right]
		\!\ge\!  n{\mu}   \!\left[\!\frac{\epsilon^2}{2} \!-\!  \frac{ \beta c \mu\kappa^2}{K \zeta^\text{act}} \!-\! \frac{3\beta c \mu\nu^2\sigma^2}{(K{\zeta^\text{act}})^2  \bar{P}^\text{rx}} \!\right], 
		\end{align}
where $\bar{P}^\text{rx} = \left[ \frac{1}{n}\sum_{m=0}^{n-1}\frac{1}{P^\text{rx}_m}\right] ^{-1}$. 
	\end{theo}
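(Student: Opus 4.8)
The plan is to combine the standard smoothness (descent) inequality for gradient methods with a careful second-moment analysis of the combined channel-and-data noise $\boldsymbol{\xi}_m$ in \eqref{noise}, and then telescope the resulting per-round bound over the $n$ rounds. Working in the real matrix (Frobenius) domain, into which the complex-vectorized quantities map isometrically, I would apply the $\beta$-Lipschitz-smoothness of Assumption~2 to the update \eqref{Eq:Update}, writing the descent direction via \eqref{FianlGradient} as $-\mu\bigl(g(\mathbf{W}_m)+\boldsymbol{\xi}_m\bigr)$. This yields
\begin{align*}
F(\mathbf{W}_{m+1}) \le F(\mathbf{W}_m) - \mu\bigl\langle g(\mathbf{W}_m), g(\mathbf{W}_m)+\boldsymbol{\xi}_m\bigr\rangle + \frac{\beta\mu^2}{2}\bigl\|g(\mathbf{W}_m)+\boldsymbol{\xi}_m\bigr\|^2.
\end{align*}

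Next I would take the conditional expectation given $\mathbf{W}_m$. The pivotal step is to verify that $\mathsf{E}[\boldsymbol{\xi}_m\mid\mathbf{W}_m]=\mathbf{0}$: the channel-noise term in \eqref{noise} is zero-mean and independent of the activation pattern, while the data-noise term is zero-mean because, conditioned on the active-set size $K_m^{(i)}$, the active set is a uniform subset of the $K$ devices and $\frac{1}{K}\sum_k\boldsymbol{\Delta}_k=\mathbf{0}$ by Assumption~1. Both first-order and second-order cross terms in $\langle g,\boldsymbol{\xi}_m\rangle$ therefore vanish in expectation, leaving
\begin{align*}
\mathsf{E}\bigl[F(\mathbf{W}_m)-F(\mathbf{W}_{m+1})\mid\mathbf{W}_m\bigr] \ge \mu\Bigl(1-\tfrac{\beta\mu}{2}\Bigr)\|g(\mathbf{W}_m)\|^2 - \frac{\beta\mu^2}{2}\mathsf{E}\bigl[\|\boldsymbol{\xi}_m\|^2\bigr].
\end{align*}
Using $\mu\le 1/\beta$ to bound $1-\beta\mu/2\ge\tfrac12$ and the defining property $\|g(\mathbf{W}_m)\|\ge\epsilon$ of $\mathcal{R}_\text{ns}$, then summing over $m=0,\dots,n-1$ and taking full expectation, the left side telescopes to $\mathsf{E}[F(\mathbf{W}_0)-F(\mathbf{W}_n)]$ and the right side becomes $\frac{n\mu\epsilon^2}{2}-\frac{\beta\mu^2}{2}\sum_{m}\mathsf{E}[\|\boldsymbol{\xi}_m\|^2]$.

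The remaining and most delicate work is to bound $\mathsf{E}[\|\boldsymbol{\xi}_m\|^2]=\sum_{i}\mathsf{E}\bigl[|[\boldsymbol{\xi}_m]_i|^2\bigr]$. Because the channel noise is independent of the data noise and zero-mean, the two contributions decouple. For the channel-noise part $\frac{\nu^2\sigma^2}{P^\text{rx}_m}\,\mathsf{E}[1/(K_m^{(i)})^2]$, I would invoke a second-inverse-moment bound for the binomial $K_m^{(i)}\sim B(K,\zeta^\text{act})$ of the form $\mathsf{E}[1/(K_m^{(i)})^2]=O\bigl(1/(K\zeta^\text{act})^2\bigr)$. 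For the data-noise part $\mathsf{E}\bigl[\bigl|\tfrac{1}{K_m^{(i)}}\sum_{k\in\mathcal{K}_m^{(i)}}[\boldsymbol{\Delta}_k]_i\bigr|^2\bigr]$, I would expand the square and exploit the correlation structure of Assumption~1: the constraint $\sum_k\boldsymbol{\Delta}_k=\mathbf{0}$ forces the off-diagonal covariances to sum to $-\sum_k|[\boldsymbol{\Delta}_k]_i|^2$, so the aggregated variance is controlled by $\zeta^\text{act}(1-\zeta^\text{act})\sum_k|[\boldsymbol{\Delta}_k]_i|^2$, which with $\mathsf{E}[|[\boldsymbol{\Delta}_k]_i|^2]\le\kappa^2$ and the $1/(K_m^{(i)})^2$ weighting yields $O\bigl(\kappa^2/(K\zeta^\text{act})\bigr)$ per element. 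Summing over the $c$ elements gives a per-round bound of the form $\mathsf{E}[\|\boldsymbol{\xi}_m\|^2]\le \frac{2c\kappa^2}{K\zeta^\text{act}}+\frac{6c\nu^2\sigma^2}{(K\zeta^\text{act})^2 P^\text{rx}_m}$, with the constants chosen to absorb the $O(\cdot)$ terms. Substituting this back and using $\sum_{m=0}^{n-1}1/P^\text{rx}_m=n/\bar{P}^\text{rx}$ from the harmonic-mean definition of $\bar{P}^\text{rx}$ reproduces exactly the bound \eqref{CompareRegion1}.

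I expect the main obstacle to be the joint treatment of the random activation set and the correlated data noise under the $1/(K_m^{(i)})^2$ weighting: establishing the second-inverse-moment bound for the binomial (handling the conditioning on $K_m^{(i)}\ge1$), and correctly using the sum-to-zero constraint of Assumption~1 — rather than mere per-device variance bounds — to obtain the $(1-\zeta^\text{act})$ variance reduction for the aggregated data noise. The clean decoupling relies on treating the activation indicators, channel noise, and data noise as mutually independent and on the fact that the active set is uniform conditioned on its size, both of which I would state explicitly before the moment computations.
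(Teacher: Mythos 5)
Your proposal follows essentially the same route as the paper's Appendix~A: the $\beta$-smoothness descent inequality applied to the update \eqref{Eq:Update} with the zero-mean noise $\boldsymbol{\xi}_m$ from \eqref{FianlGradient}, the binomial inverse-moment bounds $\mathsf{E}[1/K_m^{(i)}]\le 2/(K\zeta^\text{act})$ and $\mathsf{E}[1/(K_m^{(i)})^2]\le 6/(K\zeta^\text{act})^2$ (the paper's Lemma~\ref{inequality}), the bound $\|g(\mathbf{W}_m)\|\ge\epsilon$ in $\mathcal{R}_\text{ns}$, and telescoping via the harmonic mean $\bar{P}^\text{rx}$. Your only micro-level deviation is in the data-noise cross terms, where you extract a $(1-\zeta^\text{act})$ variance reduction from the sum-to-zero constraint, whereas the paper notes $\mathsf{E}[\boldsymbol{\Delta}_k^H\boldsymbol{\Delta}_j]=-\tfrac{1}{K-1}\boldsymbol{\Delta}_k^H\boldsymbol{\Delta}_k<0$ and simply drops these terms---both rest on the same property of Assumption~1 and lead to the identical final bound \eqref{CompareRegion1}.
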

	\begin{proof}
		See Appendix A.
	\end{proof}

First of all, one can observe from \eqref{CompareRegion1} that the expected error reduction is proportional to $n\mu$, the order of descent distance. Next, the three terms enclosed by  the brackets at the right-hand side of \eqref{CompareRegion1} quantify the effects of the slopes of the error function, data noise, and channel noise respectively, which are explained as follows. The first term is proportional to the square of the minimum slope,  $\epsilon^2$, of the error function in $\mathcal{R}_\text{ns}$. Being negative, the second term reduces the descent speed by an amount proportional to the data-noise variance, $\kappa^2$, and inversely proportional to the expected number of devices performing over-the-air aggregation, namely $K\zeta^\text{act}$. As is well known in the literature of distributed learning, the latter scaling law results from more accurate distributed estimation due to a larger global dataset where there are more devices (see e.g., \cite{1Bit,Mingzhe}). 

The last term on the channel-noise effect is new in the literature of distributed PCA. One can observe that the descent-speed reduction due to channel noise is inversely proportional to $\frac{\bar{P}^\text{rx}}{\sigma^2}$, which can be interpreted as the expected receive SNR per device. This is obvious in the case of  fixed receive power, $\bar{P}^\text{rx}_m = P^\text{rx}_0$, for all $m$ for which $\frac{\bar{P}^\text{rx}}{\sigma^2}$ reduces to $\frac{P^\text{rx}_0}{\sigma^2}$. On the other hand, over-the-air aggregation results in the expected magnitude of the aggregated signal at the server increasing linearly with respect to the expected number of devices, $K\zeta^\text{act}$. Consequently, the expected SNR after aggregation is scaled up by $(K\zeta^\text{act})^2$, causing the channel-noise term in \eqref{CompareRegion1} to decrease as an inverse function of the factor.  In addition, as a sanity check, setting the channel noise variance $\sigma^2 = 0$ and the activation probability $\zeta^\text{act} = 1$, the result in Theorem \ref{Compare} converges to the existing one assuming reliable channels \cite{Ge15}. This also applies to Theorems \ref{saddle} and \ref{SurelyConverge}.

Based on the result in Theorem \ref{Compare}, we can draw the conclusion  that it is desirable to suppress the effect of channel noise by increasing the effective receive signal power, namely  $\bar{P}^\text{rx}$. In particular, given a power sequence $\{{P}^\text{rx}_0, \dots, {P}^\text{rx}_n\}$, if another sequence $\{\acute{P}^\text{rx}_0, \dots, \acute{P}^\text{rx}_n\}$ is larger than $\{{P}^\text{rx}_m\}$ element-wise, then $\{\acute{P}^\text{rx}_m\}$ leads to  larger expected reduction on the error function over the $n$ rounds.

	\subsubsection{Descent in saddle regions}	The descent speed in a saddle  region is related to the receive signal power as well as other parameters as follows.

	\begin{theo}[Descent Speed in a Saddle Region]\label{saddle} Consider $n$-round gradient descent in  a saddle region, $\mathcal{R}_\text{sa}$, with the corresponding principal-component states $\{\mathbf{W}_0, \dots, \mathbf{W}_{n-1} \} \subset \mathcal{R}_\text{sa}$ and finite receive power $\{P^\text{rx}_0, \dots, P^\text{rx}_{n-1}\}\subset [P^{\text{rx}}_{\min}, P^{\text{rx}}_{\max}]$. Define two constants $\mathcal{V}_{\max} =  \frac{\kappa^2}{K\zeta^\text{act}}+\frac{3\nu^2  \sigma^2}{K^2{\zeta^\text{act}}^2 P^{\text{rx}}_{\min}}$ and $\mathcal{V}_{\min} =  \frac{\nu^2  \sigma^2}{K^2 P^{\text{rx}}_{\max}}$. If the step-size and number of rounds satisify
		\begin{align}\label{stepsizeL2}
		\mu \ll \frac{1}{c\mathcal{V}_{\max}}, \qquad 
		n  > \frac{1}{2\mu\gamma}\log \left(6c\frac{\mathcal{V}_{\max}}{\mathcal{V}_{\min}}+1\right) = N_{\max}, 
		\end{align}
the expected error reduction over the $n$ rounds can be lower bounded as 	
		\begin{align}\label{L2_approx}
	    &\mathsf{E}[F(\mathbf{W}_{0})- F(\mathbf{W}_{n})] \nonumber\\
		\ge &\frac{\mu}{4}  \!\left[\!\frac{\kappa^2}{K\zeta^\text{act}} \!+\!  \mu\gamma\!\!\!\!\sum\limits_{m=0}^{n\!-\!N_{\max}\!-\!1}\!\!\frac{(1\!+\!\mu\gamma)^{2(n\!-\!m\!-\!1)}\nu^2 \sigma^2}{K^2 P^\text{rx}_m} \!+\!\frac{3\nu^2  \sigma^2}{K^2{\zeta^\text{act}}^2 \!P^{\text{rx}}_{\min}}\right].	
		\end{align}		
	\end{theo}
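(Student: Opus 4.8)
The plan is to turn the negative curvature that defines a saddle region into an engine for amplifying the aggregation noise, and then to convert the resulting displacement into a guaranteed expected descent of $F$. Let $\mathbf{W}^{\text{sp}}$ be the saddle point associated with $\mathcal{R}_\text{sa}$ and let $\mathbf{e}$ be a unit eigenvector of $\mathcal{H}(\mathbf{W}^{\text{sp}})$ attaining $\lambda_{\min}\le-\gamma$. The central object I would track is the scalar projection $a_m=\langle\mathbf{W}_m-\mathbf{W}^{\text{sp}},\mathbf{e}\rangle$ of the iterate onto this escape direction. The governing intuition is that, along $\mathbf{e}$, the update~\eqref{Eq:Update} behaves like an \emph{unstable} first-order recursion whose per-round growth factor is at least $1+\mu\gamma$; hence noise injected at round $m$ is magnified by $(1+\mu\gamma)^{n-m-1}$ by round $n$, and accumulating its variance reproduces exactly the geometric weights $(1+\mu\gamma)^{2(n-m-1)}$ appearing in~\eqref{L2_approx}.

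First I would Taylor-expand $F$ about $\mathbf{W}_m$ to second order and bound the cubic remainder by $\tfrac{\chi}{6}\|\mathbf{W}_{m+1}-\mathbf{W}_m\|^3$ using the Hessian $\chi$-smoothness of Assumption~2, obtaining a per-round inequality linking $F(\mathbf{W}_m)-F(\mathbf{W}_{m+1})$ to the gradient term, the Hessian quadratic form, and a higher-order residual. Substituting the noisy update $\nabla\widehat F=\nabla F+\boldsymbol{\xi}_m$ and taking the conditional expectation over $\boldsymbol{\xi}_m$—zero-mean by Assumption~1 and the channel model—isolates the noise second moment. Projecting~\eqref{Eq:Update} onto $\mathbf{e}$ and linearising $\langle\nabla F(\mathbf{W}_m),\mathbf{e}\rangle\approx\lambda_{\min}a_m$ near the saddle then gives $\mathsf{E}[a_{m+1}^2\mid\mathcal{F}_m]\ge(1+\mu\gamma)^2a_m^2+\mu^2\,\mathsf{E}[[\boldsymbol{\xi}_m]_{\mathbf{e}}^2]$, which unrolls to $\mathsf{E}[a_n^2]\ge\mu^2\sum_{m=0}^{n-1}(1+\mu\gamma)^{2(n-1-m)}\mathsf{E}[[\boldsymbol{\xi}_m]_{\mathbf{e}}^2]$. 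Reading the per-round noise variance off~\eqref{noise}—which splits into a correlated data-noise part bounded by $\kappa^2/(K\zeta^\text{act})$ and a channel-noise part scaling as $\nu^2\sigma^2/(K^2P^\text{rx}_m)$—supplies the summand. Finally, the strong negative curvature yields $F(\mathbf{W}_0)-F(\mathbf{W}_n)\ge\tfrac{\gamma}{2}\mathsf{E}[a_n^2]-(\text{residual})$; keeping in the sum only the heavily-amplified early rounds $m\le n-N_{\max}-1$ (legitimate for a lower bound) and invoking $n>N_{\max}$ from~\eqref{stepsizeL2}, whose logarithmic form makes the amplification reach order $6c\,\mathcal{V}_{\max}/\mathcal{V}_{\min}$—large enough for the amplified channel-noise sum to dominate the residual—leaves the stated bound with prefactor $\tfrac{\mu}{4}$. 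As a consistency check, the two un-summed terms $\tfrac{\kappa^2}{K\zeta^\text{act}}+\tfrac{3\nu^2\sigma^2}{K^2{\zeta^\text{act}}^2P^\text{rx}_{\min}}$ are precisely $\mathcal{V}_{\max}$.

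The main obstacle will be controlling the non-quadratic terms across the many rounds over which the iterate is actively driven away from $\mathbf{W}^{\text{sp}}$: the linearisation $\langle\nabla F,\mathbf{e}\rangle\approx\lambda_{\min}a_m$ is exact only at the saddle, so I must use the $\chi$-Lipschitz Hessian to certify that its error, together with the cubic Taylor residual, stays of lower order than the exponentially growing $a_m^2$. This is exactly what forces both the smallness requirement $\mu\ll1/(c\mathcal{V}_{\max})$ and the logarithmic waiting time $N_{\max}$, and making that threshold sharp is the crux of the argument. A secondary difficulty is that the data-noise vectors $\{\boldsymbol{\Delta}_k\}$ are correlated across devices (as flagged after Assumption~1), so $\mathsf{E}[[\boldsymbol{\xi}_m]_{\mathbf{e}}^2]$ cannot be bounded by naive independence; instead I would combine the aggregate covariance constraint $\mathsf{E}[\boldsymbol{\Delta}_k\boldsymbol{\Delta}_k^H]\le\kappa^2\mathbf{I}$ with the random activation count $K_m^{(i)}\sim B(K,\zeta^\text{act})$, and carrying the $K$ and $\zeta^\text{act}$ scalings correctly through this averaging is the delicate bookkeeping that produces the exact constants $\mathcal{V}_{\max}$ and $\mathcal{V}_{\min}$.
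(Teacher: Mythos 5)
Your overall architecture is the paper's (Appendix C): a second-order Taylor model whose cubic remainder is controlled by the $\chi$-Lipschitz Hessian, unrolled linear dynamics of the noisy update producing the geometric weights $(1+\mu\gamma)^{2(n-m-1)}$ along the most-negative-curvature direction, binomial activation statistics (the paper's Lemma~\ref{inequality}) yielding the per-coordinate variance bracket $\mathcal{V}_{\min}\le 2\,\mathsf{E}\big[[\vec{\boldsymbol{\xi}}_m]_i^2\big]\le \mathcal{V}_{\max}$, and the waiting time $N_{\max}=\frac{1}{2\mu\gamma}\log\big(6c\frac{\mathcal{V}_{\max}}{\mathcal{V}_{\min}}+1\big)$ chosen exactly so that the amplified escape-direction noise covers the $O(c\,\mu\,\mathcal{V}_{\max})$ cost that the same noise accumulates in the other $2c-1$ directions, with the sum split at $m=n-N_{\max}-1$ recovering \eqref{L2_approx}. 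Your scalar projection $a_m$ is a repackaging of the paper's full eigenbasis expansion rather than a genuinely different route: to bound your ``residual'' you would have to redo precisely the paper's positive-eigenvalue computation $\mu\lambda_i\sum_{m}(1-\mu\lambda_i)^{2(n-m-1)}\le 1$, since that residual, $\frac{\mu(2c-1)}{4}\mathcal{V}_{\max}$, is $2c-1$ times larger than the net gain $\frac{\mu}{4}\mathcal{V}_{\max}$ you must extract.

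There are, however, two concrete gaps. First, you center the quadratic model at the saddle point $\mathbf{W}^{\text{sp}}$ and linearize $\langle\nabla F(\mathbf{W}_m),\mathbf{e}\rangle\approx\lambda_{\min}a_m$; but membership in $\mathcal{R}_\text{sa}$ only gives $\|\nabla F(\mathbf{W}_m)\|<\epsilon$ and $\lambda_{\min}(\mathcal{H})\le-\gamma$ — it does \emph{not} bound $\|\mathbf{W}_0-\mathbf{W}^{\text{sp}}\|$, so the linearization error $O\big(\chi\|\mathbf{W}_m-\mathbf{W}^{\text{sp}}\|\big)$ is uncontrollable from the stated assumptions. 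The paper avoids this by expanding around $\mathbf{W}_0$ itself and carrying the nonzero initial gradient $\vec{g}(\mathbf{W}_0)$ explicitly through the linear dynamics \eqref{GraTaylor}--\eqref{DisTaylor}; its contribution to the error reduction has non-negative weights $(1-\mu\lambda_i)^{m}>0$ and can simply be dropped, and only $\|\mathbf{W}_m-\mathbf{W}_0\|$ needs to be controlled. Second, your validity certificate — that the linearization error ``stays of lower order than the exponentially growing $a_m^2$'' — is the wrong invariant: if $a_m^2$ were allowed to grow exponentially large, the cubic remainder $\chi|a_m|^3$ would eventually dominate $\gamma a_m^2$ and the quadratic model would collapse. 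What is actually required, and what the paper supplies via a Hoeffding/martingale concentration argument (eqs. \eqref{stable}--\eqref{DisNmax}), is that with probability $1-O(\mu)$ the total displacement remains $O\big(\mu^{1/2}\log\frac{1}{\mu}\big)\le\epsilon$ for all rounds up to $N_{\max}$ — smallness throughout the sojourn, not domination by growth. Without such a concentration step you cannot sum the per-round residuals $\mu^3\|\vec{\boldsymbol{\xi}}_m\|^3$ and the Hessian-drift errors against the $O(\mu\,\mathcal{V}_{\max})$ net gain, so the final inequality would not close.
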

	\begin{proof}
		See Appendix C.
	\end{proof}
	
In a saddle region [see Fig. \ref{ThreeRegions}(b)], the gradient descent may be infeasible in some dimensions (e.g., one in which the error function is convex and the current point is the minimum); descent is guaranteed only in the dimension corresponding to the minimum eigenvalue $\lambda_{\min}(\mathcal{H}(\mathbf{W}))\le -\gamma$ which is concave. The result in Theorem \ref{saddle} shows that the gradient perturbation by the data-and-channel noise has the beneficial effect of warranting the expected descent (or equivalently strictly positive expected error reduction) if the step-size is sufficiently small and the number of rounds is sufficiently large. This results in a high probability of descending in the dimension corresponding to $\lambda_{\min}(\mathcal{H}(\mathbf{W}))$ due to the noise induced randomization of the descending  direction. In the brackets at the right-hand side of \eqref{L2_approx}, the first term and the last two terms represent the positive effects of data and channel noise on the descent speed, respectively, as opposed to their negative effects in a non-stationary region (see Theorem \ref{Compare}). 

An observation important for power control that can be made from  \eqref{L2_approx} is that enhancing the channel noise by reducing the receive signal power, $\{P^\text{rx}_m\}$, enhances the expected error reduction. Thus, it is desirable to set the power to its minimum, $P^\text{rx}_m = P^\text{rx}_{\min}$. As a result, the bound on the expected error reduction can be simplified as 
\begin{align}
&\mathsf{E}[F(\mathbf{W}_{0})- F(\mathbf{W}_{n})]\nonumber\\ 
		\ge &\frac{\mu}{4}  \left[\frac{\kappa^2}{K\zeta^\text{act}} +  \left(\phi(\mu,n)+\frac{3}{(\zeta^\text{act})^2}\right) \frac{\nu^2 \sigma^2}{K^2 P^{\text{rx}}_{\min}} \right].	
\end{align}	
where $\phi(\mu,n) = \frac{(1+\mu\gamma)^{2n}-(1+\mu\gamma)^{2N_{\max}}}{2+\mu\gamma}$. On the other hand, it should be emphasized that the receive signal power should not be too low as too strong noise can make the aggregated gradient (or equivalently the descent direction) completely random and thereby make it impossible to truly escape from a saddle point in the long term, namely repeatedly  returning to the point. 
	
	\subsubsection{Convergence likelihood and accuracy}	
	
The results in  Theorems \ref{Compare} and \ref{saddle} show that the gradient descent of  AirPCA is not trapped in any non-stationary or saddle region. Consequently, the descent path eventually enters the optimum region almost surely, leading to learning convergence. The likelihood of convergence can be mathematically characterized in the following theorem, where the constants $\mathcal{V}_{\max}$ and $N_{\max}$ follow those  defined in Theorem \ref{saddle}. 

\begin{theo}\label{SurelyConverge}
Consider  $N$-round gradient descent for AirPCA from an arbitrary initial point  and  a step-size $\mu$ satisfying $\mu \ll \frac{1} {c \mathcal{V}_{\max}}$ and $\mu < \frac{\epsilon^2} {4 \beta c \mathcal{V}_{\max}}$. Let $\mathcal{E}_N$ denote the event  that the descent path enters the optimum region within $N$ rounds: $\mathcal{E}_N = \left\{\text{There exists some $n$ such that $0\!\le\! n \!\le\! N\!-\!1$}\right.\\ \left.\text{and $\mathbf{W}_n\in \mathcal{R}_\text{op}$}.\right\}$. If $N \!=\! mN_{\max}$ with $m\!\in\!\mathds{N}^+$, the probability of $\mathcal{E}_N$  can be lower bounded as
	\begin{align}
	\Pr(\mathcal{E}_{N}) \ge 1-\frac{12B}{(m+1) \mu \rho \mathcal{V}_{\max}},
	\end{align}
	where the constant  $\rho = \min\left\{\frac{2\beta c}{\gamma} \log(6c\frac{\mathcal{V}_{\max}}{\mathcal{V}_{\min}}+1), 1 \right\}$, and $B$ is the upper-bound on the error-function norm.
	\end{theo}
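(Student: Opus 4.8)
The plan is to turn the per-region descent guarantees of Theorems~\ref{Compare} and \ref{saddle} into a bound on how long the trajectory can avoid $\mathcal{R}_\text{op}$, exploiting that the PCA objective is non-negative and bounded, $0 \le F(\mathbf{W}) \le B$. Introduce the first hitting time $\tau = \min\{n \ge 0: \mathbf{W}_n \in \mathcal{R}_\text{op}\}$, so the failure event is $\mathcal{E}_N^c = \{\tau \ge N\}$; on this event every iterate $\mathbf{W}_0,\dots,\mathbf{W}_{N-1}$ lies in $\mathcal{R}_\text{ns}\cup\mathcal{R}_\text{sa}$, so the descent bounds apply along the whole path. I would partition the $N = mN_{\max}$ rounds into $m$ consecutive blocks of length $N_{\max}$ and write $G_j = F(\mathbf{W}_{jN_{\max}})$ for the objective at the block boundaries.

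The first step is a uniform per-block descent guarantee $\mathsf{E}[G_j - G_{j+1}\mid \mathcal{F}_{jN_{\max}}] \ge \Delta$ for any block lying entirely in the non-optimum region, with $\Delta = \tfrac{\rho}{4}\mu\mathcal{V}_{\max}$. For a block contained in $\mathcal{R}_\text{sa}$ I would invoke Theorem~\ref{saddle} at $n = N_{\max}$: the summation in \eqref{L2_approx} is then empty, leaving the clean bound $\tfrac{\mu}{4}\mathcal{V}_{\max} \ge \Delta$. For a block contained in $\mathcal{R}_\text{ns}$, the single-round form of Theorem~\ref{Compare} together with the hypothesis $\mu < \epsilon^2/(4\beta c\mathcal{V}_{\max})$ bounds the negative terms by $\beta c\mu\mathcal{V}_{\max}\le\tfrac{\epsilon^2}{4}$ and hence gives a per-round drop of at least $\tfrac{\mu\epsilon^2}{4}$, so the block drop is at least $N_{\max}\tfrac{\mu\epsilon^2}{4} \ge N_{\max}\beta c\mu^2\mathcal{V}_{\max} = \tfrac{1}{4}(4\beta c\mu N_{\max})\mu\mathcal{V}_{\max}$; recalling that $\log(6c\mathcal{V}_{\max}/\mathcal{V}_{\min}+1) = 2\mu\gamma N_{\max}$ makes this exactly $\tfrac{\rho}{4}\mu\mathcal{V}_{\max}$ when $\rho = 4\beta c\mu N_{\max}\le 1$ and at least $\tfrac{\mu}{4}\mathcal{V}_{\max}$ otherwise, so in both regimes the bound $\Delta$ holds.

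With the per-block bound in hand, I would build a supermartingale. Let $\sigma = \lfloor\tau/N_{\max}\rfloor$ index the block in which $\mathcal{R}_\text{op}$ is first reached, and set $M_j = F(\mathbf{W}_{(j\wedge\sigma)N_{\max}}) + \Delta\,(j\wedge\sigma)$, i.e.\ the boundary objective augmented by $\Delta$ per completed non-optimum block and frozen from block $\sigma$ onward. For $j < \sigma$ block $j$ is fully non-optimum, so the per-block guarantee gives $\mathsf{E}[M_{j+1}\mid\mathcal{F}_{jN_{\max}}]\le M_j$, while for $j \ge \sigma$ the process is constant; thus $\{M_j\}$ is a supermartingale with $\mathsf{E}[M_m]\le M_0 = F(\mathbf{W}_0)\le B$. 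Since $F\ge 0$, on $\{\tau\ge N\}=\{\sigma\ge m\}$ we have $M_m \ge m\Delta$, while $M_m\ge 0$ always; hence $m\Delta\,\Pr(\tau\ge N)\le B$, and substituting $\Delta=\tfrac{\rho}{4}\mu\mathcal{V}_{\max}$ gives $\Pr(\mathcal{E}_N)\ge 1-\tfrac{4B}{m\mu\rho\mathcal{V}_{\max}}$, matching the stated form up to the numerical constant and the $m\to m+1$ normalization.

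The hard part is the per-block guarantee for \emph{mixed} blocks, those that switch between $\mathcal{R}_\text{ns}$ and $\mathcal{R}_\text{sa}$ within $N_{\max}$ rounds. There Theorem~\ref{saddle} does not apply, and a saddle sub-run shorter than $N_{\max}$ can let $F$ rise before the noise-driven escape takes effect, so a naive ``non-stationary gain minus saddle loss'' split can be negative. I would resolve this exactly as in the martingale framework of \cite{Ge15}: instead of tracking $F$ alone, track an augmented potential that subtracts a multiple of the squared projection of $\mathbf{W}_n$ onto the eigenvector of $\mathcal{H}(\mathbf{W})$ associated with $\lambda_{\min}(\mathcal{H})\le-\gamma$. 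Under the injected noise this projection grows geometrically at rate $1+\mu\gamma$ during any saddle phase and certifies a net decrease once a phase reaches length $N_{\max}$, while non-stationary rounds only add to the decrease; verifying that this augmented potential remains a supermartingale across region boundaries, and matching its geometric rate to the constants through $\log(6c\mathcal{V}_{\max}/\mathcal{V}_{\min}+1)=2\mu\gamma N_{\max}$, is where the factor $12$ and the $m+1$ normalization enter and constitutes the main bookkeeping burden.
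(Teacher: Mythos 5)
Your proposal follows essentially the same route as the paper's Appendix D: both extract a uniform per-block expected decrease of $\frac{\mu\rho}{4}\mathcal{V}_{\max}$ over $N_{\max}$-round blocks outside $\mathcal{R}_\text{op}$ --- the saddle case from Theorem \ref{saddle} with the power sum empty at $n=N_{\max}$, the non-stationary case from the per-round bound combined with $\mu<\epsilon^2/(4\beta c\mathcal{V}_{\max})$, and the same $\rho$ emerging from the minimum of the two rates --- and then convert boundedness of $F$ into a bound on the probability of never hitting $\mathcal{R}_\text{op}$. The differences are mostly cosmetic, with one substantive point. Cosmetically, your stopped supermartingale $M_j$ is a cleaner packaging of the paper's argument, which instead telescopes the indicator-weighted quantity $F(\mathbf{W}_{n+N_{\max}})\mathsf{1}_{\mathcal{E}_n}$ in \eqref{prob}; because you additionally exploit $F\ge 0$ while the paper uses only $\|F(\mathbf{W})\|\le B$, you in fact obtain a slightly sharper constant ($4B/m$ versus $12B/(m+1)$), and your attribution of the factor $12$ to augmented-potential bookkeeping is the one misreading --- in the paper it arises purely from the two-sided bound on $F$ in the telescoping step, not from any potential-function argument. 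Substantively, the mixed-block issue you isolate is genuine: Theorems \ref{Compare} and \ref{saddle} require the entire block of iterates to remain in a single region, whereas the absorption argument needs an expected decrease conditioned only on $\mathbf{W}_n\notin\mathcal{R}_\text{op}$. The paper does not resolve this gap --- its inequalities \eqref{descent_R1} and \eqref{descent_R2} condition only on the region membership of the block's \emph{initial} iterate and the combined bound is then asserted, silently treating the region type as constant over each block --- so your sketched Ge-style augmented potential, while left unexecuted, is a more careful treatment of precisely the step that both your argument and the paper's actually need.
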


	\begin{proof}
		See Appendix D.
	\end{proof}

Theorem \ref{SurelyConverge} shows that if the step-size $\mu$ is sufficient small and the number of rounds is sufficiently large, the convergence is guaranteed in probability by ensuring $\Pr(\mathcal{E}_{N})$ close to one. 
Although it is possible for the descent path to escape from the optimum region due to  accidental strong noise, it will return to  $\mathcal{R}_\text{op}$ almost surely  according to Theorem \ref{SurelyConverge}. 

A standard analytical method for SGD can be applied to characterize the convergence accuracy. For instance, by  similar analysis as in \cite{Theo3Ref,Ge15}, it can be shown that if the number of rounds is sufficiently large, the distance between the learned principal components, $\mathbf{W}_{n}$,  and the optimal point $\mathbf{W}^\star$, namely $\|\mathbf{W}_{n} - \mathbf{W}^\star \|^2$, is linearly proportional to $\mu q \sqrt{\mathsf{E}[\|\boldsymbol{\xi}\|^2]}$ where $\boldsymbol{\xi}$ is the data-plus-channel noise sample in \eqref{FianlGradient}.

	\section{Region-Adaptive Power Control}\label{PowerControl}
Building on the convergence analysis in the preceding section, the scheme of region-adaptive power control to accelerate AirPCA is designed in this section. The scheme comprises of two component schemes, online detection of descent regions and online power control. They are described sequentially in the following subsections. 

	\subsection{Online Detection of Descent Regions}

Online detection of the type of the current descent region is the key for realizing the proposed scheme of region-adaptive power control. The main challenge lies in detecting a saddle region due to the conflict. Consider an arbitrary round, say the $n$-th round. On one hand, it follows from  the region's definition that its type can be detected by estimating  the minimum eigenvalue of the Hessian matrix, namely  $\lambda_{\min}(\mathcal{H}(\mathbf{W}_n))$, and evaluating its value  against some  given negative constant  $-\gamma$. If a saddle region is detected, channel noise should be enhanced so that the descent path can escape from being trapped at the saddle point. On the other hand, the estimation of the Hessian matrix $\mathcal{H}(\mathbf{W}_n)$ is difficult. Specifically, at best the server has the knowledge of one descent path that provides only  partial knowledge of $\mathcal{H}(\mathbf{W}_n)$ but the full knowledge is required for computing its eigenvalues.  Due to the difficulty of detecting a saddle region based on its definition, we propose a simple and effective online detection scheme described as follows. Again, consider the $n$-th round where the norm of the aggregated gradient $\|\widehat{g}(\mathbf{W}_{n})\|$ is found to be below a given threshold $\epsilon$ while that in the preceding round is above $\epsilon$. This indicates the descent path is entering a region which is either a saddle or an optimum region. By default, the region is detected as a saddle region and then the receive signal power is reduced to amplify the noise effect for the path to escape from a saddle point.  Given a decreased SNR, the gradient descent is continued for $N_0$ rounds where $N_0$ is a design parameter. Then the resultant  expected error  reduction over $N_0$ rounds, namely $[F(\mathbf{W}_{n-N_0}) - F(\mathbf{W}_{n})]$,  is evaluated against a positive threshold $f_0$. If the detection of a saddle region is correct, the escape from the saddle point should lead to substantial error reduction according to Theorem \ref{saddle} and thus $[F(\mathbf{W}_{n-N_0}) - F(\mathbf{W}_{n})] \geq f_0$. Otherwise, the detection is incorrect and the region should be the optimum region. Assuming that the decreased SNR is not too low so that the descent path remains within the region after $N_0$ rounds, the power control is adapted to the optimum region to reduce  noise to ensure a  small error after convergence. Last, the detection of a non-stationary region is straightforward and the criterion is $\|\widehat{g}(\mathbf{W}_{n})\| \ge \epsilon$.  

The scheme of online descent-region detection is summarized in Algorithm \ref{alg:algorithm2}. 

\begin{algorithm}[!t]
\caption{Online Descent-Region Detection.\label{alg:algorithm2}}	
	\KwIn{Error reduction threshold $f_0$ and testing round number $N_0$.}
	\KwOut{Region detection $\Theta_{n}$.}  

	{\bf Initialize} $n=0$, and $\Theta_{n} = \mathcal{R}_\text{ns}$;

	\While{not converge}{
		
		Calculate $\|\widehat{g}(\mathbf{W}_n)\|$;
		
		\If{$\|\widehat{g}(\mathbf{W}_n)\| < \epsilon$ and $\Theta_n = \mathcal{R}_\text{op}$}{
			
		Detect $\Theta_{n+1} = \mathcal{R}_\text{op}$;	
			
		Reduce noise and continue one round;	
			
		Set $n=n+1$;		
	    }
    
    	\ElseIf{$\|\widehat{g}(\mathbf{W}_n)\| < \epsilon$ and $\Theta_n \ne \mathcal{R}_\text{op}$}{
    	
    	Detect $\Theta_{n+1}, \dots, \Theta_{n+N_0} = \mathcal{R}_\text{sa}$ by default;	
    			
    	Continue $N_0$ rounds;			
    	
    	Set $n=n+N_0$;
    	
    	Calculate $[F(\mathbf{W}_{n-N_0}) - F(\mathbf{W}_{n})]$ by aggregation;
    	
    	\If{$[F(\mathbf{W}_{n-N_0}) - F(\mathbf{W}_{n})] < f_0$}{
    		
    		Detect $\Theta_{n} = \mathcal{R}_\text{op}$;	}
    }
		\Else{ 			
		Detect $\Theta_n = \mathcal{R}_\text{ns}$;
					
		Continue one round;			
		
		Set $n=n+1$;
		}		
	}	
\end{algorithm}

	\subsection{Online Power Control}\label{ProScheme}
Building on the preceding scheme of online region detection, the principle of region-adaptive power control is to reduce receive signal power when  the descent path enters a saddle region but increase the power if the path enters a non-stationary or optimum region. The former helps the path escape from a saddle point using channel noise (see Theorem \ref{saddle}) while the latter overcomes the noise to approach the steepest descent (see Theorem \ref{Compare}). 

Consider the case where a saddle region, $\mathcal{R}_\text{sa}$,  is detected. Then truncated channel inversion in \eqref{truncation} is controlled by each device so that the receive signal power is fixed at a chosen parameter  $P^\text{rx}_{\min}$ throughout the sojourn in $\mathcal{R}_\text{sa}$. Mathematically,  $P^\text{rx}_{n} = P^\text{rx}_{\min}$ for all $\mathbf{W}_n \in \mathcal{R}_\text{sa}$. The  parameter  $P^\text{rx}_{\min}$ should be chosen carefully, e.g., using experiments in the sequel.  As discussed, though $P^\text{rx}_{\min}$ should be sufficiently low so as to exploit the noise effect, its being too low can jeopardise finding the right descent path. Under the average power constraint in \eqref{powerResult},  it is necessary to choose $P^\text{rx}_{\min}$ to be smaller than the maximum average receive power $\bar{P}^\text{rx}_{\max}$, which saves power for use in other types of regions. Let $N_{\text{sa}}$ denotes the number of rounds for descent within $\mathcal{R}_\text{sa}$. Then the power saving is given as $N_{\text{sa}}(\bar{P}^\text{rx}_{\max} - P^\text{rx}_{\min})$.

Next, consider where either a non-stationary or optimum region is detected, denoted as $\mathcal{R}_\text{ns/op}$. The power-control policy is identical for both types of regions. Its key feature is to spend the accumulated power saving on accelerating the  descent in the current region. Let $n_0, n_1, \cdots, n_{N-1}$ denote the  rounds within $\mathcal{R}_\text{ns/op}$ with $N$ representing the total number of rounds. The accumulated saving can be written as $P^\text{rx}_{\text{save}} = \sum_{m = 0}^{n_0-1}(\bar{P}^\text{rx}_{\max} - P^\text{rx}_{\min})$. We propose that the receive signal power in the current region is controlled as $P^{\text{rx}}_n =  \bar{P}^{\text{rx}}_\text{max} + a_n {P^{\text{rx}}_{\text{save}}}$ for $n_0 \leq n \leq n_{N-1}$. The coefficients $\{a_n\}\!\subset\! [0, 1]$ are called power-spending coefficients and set using one of the following two designs. 
	\begin{enumerate}
		\item \emph{One-shot power-saving spending:} All of the accumulated power saving is used  in the first round upon the descent path entering $\mathcal{R}_\text{ns/op}$, namely $a_{n_0} = 1$ and $a_{n} = 0$ for $n = n_1, \cdots, n_{N-1}$. In other words, $P^{\text{rx}}_{n_0} =  \bar{P}^{\text{rx}}_\text{max} + {P^{\text{rx}}_{\text{save}}}$ and $P^{\text{rx}}_{n} =  \bar{P}^{\text{rx}}_\text{max}$  for $n = n_1, \cdots, n_{N-1}$. 

		\item \emph{Gradual power-saving spending:} The accumulated power saving is spent  over all rounds following  $a_{n_j} = (1-q)q^{j}$ for $0\leq j \leq N-1$ with $q \in (0, 1)$. Since $\sum_{n = n_0}^{n_{N-1}} a_{n_j}\leq 1$, all of the accumulated power saving is spent in $\mathcal{R}_\text{ns/op}$ if $N$ is large or $q$ is close to zero. Otherwise only part of the saving is used and the remaining is kept for subsequent regions along the descent path. 
		
	\end{enumerate}

Last, it should be emphasized that the above scheme for online power control guarantees that the average power constraint is satisfied. Moreover, the computation complexity of the power control scheme is $\mathcal{O}(Dd)$ for each round.

\section{Experimental Results}\label{Sim}

\subsection{Experiment Settings}
The default settings are as follows unless specified otherwise.  Three popular real training  datasets, MNIST, CIFAR-10 and AR, are used in separate experiments. After vectorization, the dimensions,  $D$,  of a single  data sample are $784$, $3072$ and $4800$, respectively. The reduced data dimensions are  set as $d=10$. For each experiment, $500$ data samples are randomly drawn from the dataset and  uniformly distributed over  $50$ devices. The parameters of  truncated channel inversion in \eqref{truncation} are set as  $G=0.2$ and the resultant  activation probability is  $\zeta^\text{act}=\exp(-0.2)$. The step-size is  $\mu=0.005$ for MNIST dataset and $\mu=0.02$ for CIFAR-10 and AR datasets, which are optimized by trials. Given the learned principal components, the PCA error is  evaluated using a testing dataset comprising $500$ samples randomly drawn from the  used dataset. The number of sub-channels is $M=1000$ with interval of $15$ kHz. The channel coefficients are identically distributed complex Gaussian variables with zero mean and unit variance for each sub-channel. We set the average transmit power for all devices as $\bar{P} = 26$ dBm and the noise power as $-100$ dBm over the whole bandwidth. For SGD, the principal components are initialized using the $d$-dimension identity matrix $\mathbf{I}_d$:  $\mathbf{W}_0 = \left[ \mathbf{I}_d,\mathbf{0}\right]^T $. 

Two benchmarking schemes are considered. One is fixed receive power: $P^{\text{rx}}_{n} = \bar{P}^{\text{rx}}_{\max}$ for all $n$. The other is  the ideal case of centralized PCA using  SVD.

	\subsection{Region-Adaptive Power Control}
	
	
	\begin{figure}[t]
			\centering
			\subfigure[Comparison with noise-free AirPCA.]{\includegraphics[height=5.6cm]{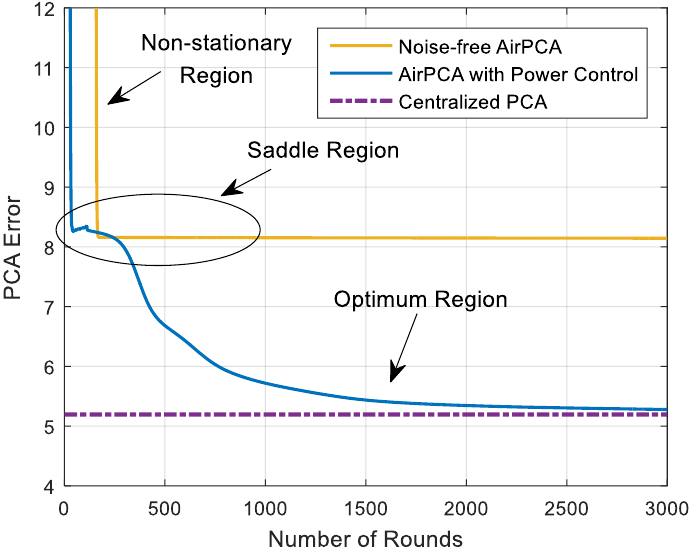}}
			\subfigure[Comparison with AirPCA using fixed power.]{\includegraphics[height=5.6cm]{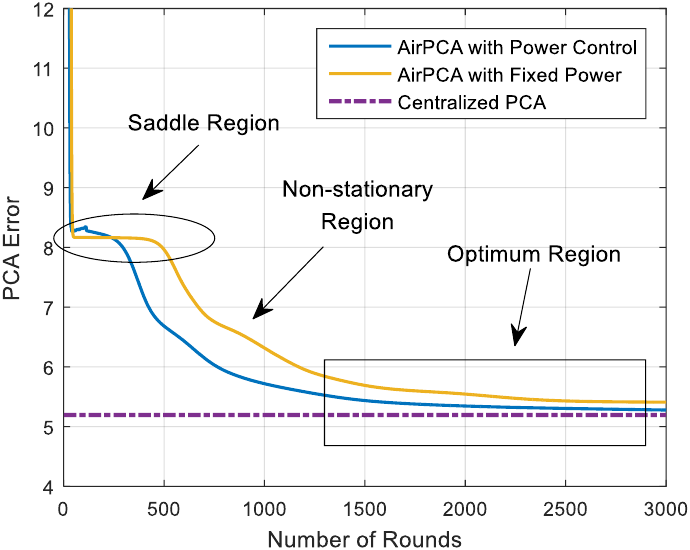}}
		\caption{The usefulness  of channel noise for AirPCA to escape from saddle points can be observed by comparing AirPCA with region-adaptive power control, noise-free AirPCA, and AirPCA with fixed power. The MNIST dataset is used.}\label{MNIST:Noise}\vspace{-0.5cm}
	\end{figure}	

\begin{figure}[t]
	\centering
	\subfigure[CIFAR-10 Dataset.]{\includegraphics[height=5.6cm]{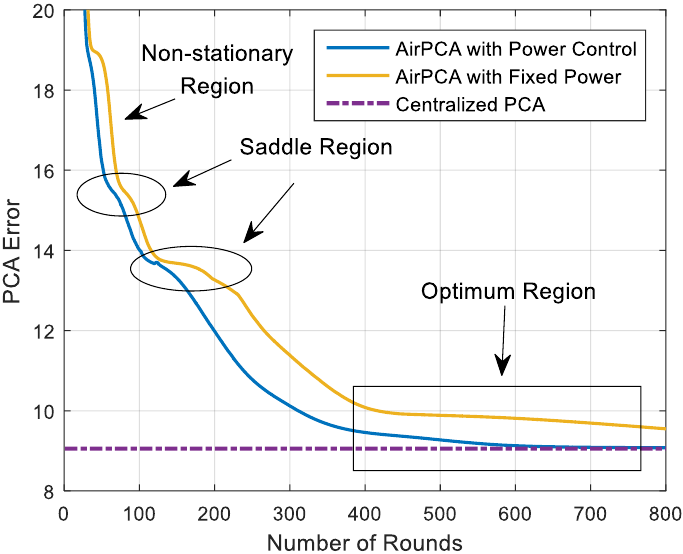}}
	\subfigure[AR Dataset.]{\includegraphics[height=5.6cm]{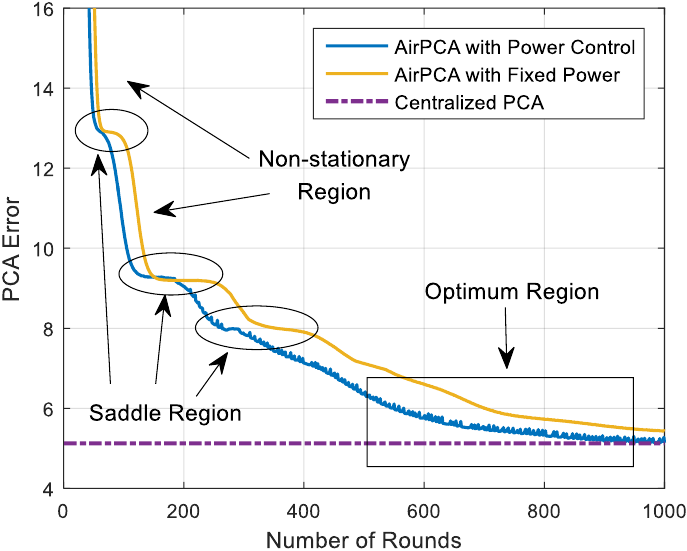}}
	\caption{Learning performance comparison using the CIFAR-10 and AR datasets with the step-size set as $\mu=0.02$.}\label{CIFAR}\vspace{-0.5cm}
\end{figure}	
To demonstrate the benefit of channel noise, the curves of PCA error versus  number of rounds  are plotted in Fig.~\ref{MNIST:Noise}(a) for  AirPCA with channel noise and region-adaptive power control (labeled as ``AirPCA with Power Control'')  and AirPCA without channel noise (labeled as ``Noise-free AirPCA''). The curve for centralized PCA is also plotted for comparison. The learned principal components of AirPCA with noise are observed to converge to those of centralized PCA after about $2000$ rounds while those in the noise-free case fail to do so. The reason is that the (gradient) descent path of the former escapes from the saddle point with the help of channel noise while that of the latter  is trapped at the point. Next, the learning performance of AirPCA with region-adaptive power control, AirPCA with fixed power, and centralized PCA are compared in Fig.~\ref{MNIST:Noise}(b), where the curves of PCA error versus number of rounds are plotted. One can observe that the proposed power-control scheme effectively accelerates the convergence w.r.t. the case with fixed power. For instance, to achieve the PCA error $7\%$ (i.e., error of $5.6$) above the level of centralized PCA (i.e., error of $5.2$), the learning latency is about {\bf $\mathbf{1170}$ rounds} compared with {\bf $\mathbf{1740}$ rounds} for AirPCA with fixed power, namely {\bf $\mathbf{33\%}$ reduction} in learning latency. Furthermore, the learning performance is also compared using two other datasets, CIFAR-10 and AR, in Fig.~\ref{CIFAR}. 
As in the last comparison, one can make the same observation that region-adaptive power control accelerates convergence. Last, it is worth mentioning that the initial part of the descent process for MNIST (see Fig. \ref{MNIST:Noise}) is relatively abrupt as compared with those for the other datasets (see Fig. \ref{CIFAR}). The  reason is that the  data samples in MNIST are black-and-white images of handwritten letters for which  the data information is more  concentrated in the subspace of principal components than that of CIFAR-10 and AR, composed of colorful and gray-scale images, respectively. In general, the descent speed depends on the power distribution of the components, which varies w.r.t. different datasets.

	
Next, in Fig.~\ref{PEC},  we compare the two designs of power-spending coefficients, namely one-shot and gradual power-saving spending,  in the proposed scheme of region-adaptive power control in terms of their effects on the learning performance. Both the MNIST and CIFAR-10 datasets are used and the descent step-sizes are set as $\mu=0.005$ and $\mu=0.02$, respectively. One can see that gradual spending of power-saving in the non-stationary and optimum regions with an optimized parameter (i.e., $q = 0.8$) achieves faster convergence than the one-shot schemes or gradual schemes with alternative values for $q$ (e.g., $0.5$ or $0.995$). It can be  observed  that their  different effects on the convergence lie in the stationary and optimum regions but not in the saddle regions where signal power is unaffected by the power-spending coefficients. Furthermore, the convergence accuracies are unaffected. 
	
	\begin{figure}[t]
		\centering
		\subfigure[MNIST Dataset.]{\includegraphics[height=5.6cm]{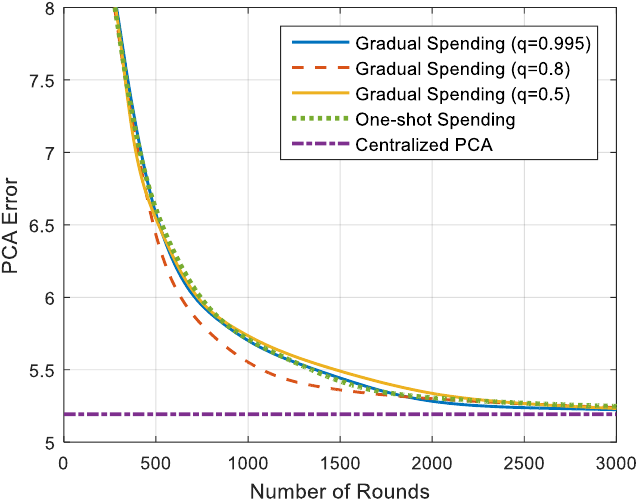}}
		\subfigure[CIFAR-10 Dataset.]{\includegraphics[height=5.6cm]{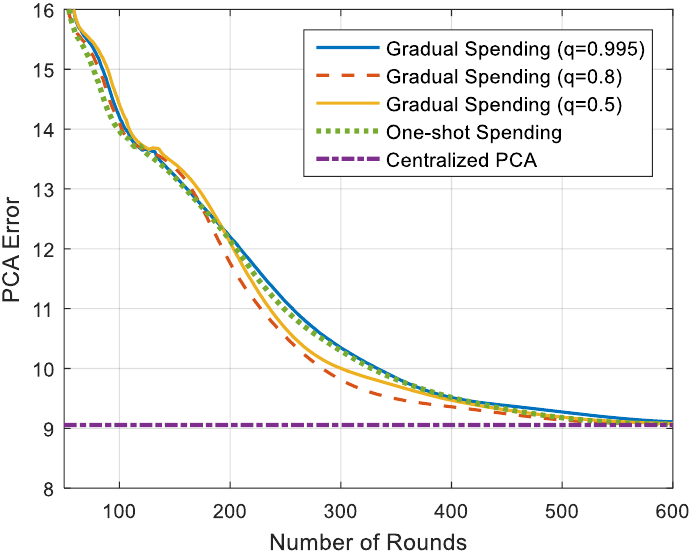}}
		\caption{The effects of power-spending coefficients on the learning performance of  AirPCA with region-adaptive power control. The datasets MNIST and  CIFAR-10 are used.}\label{PEC}\vspace{-0.3cm}
	\end{figure}

\subsection{Effects of Other System Parameters}


Considering AirPCA with region-adaptive power control, the curves of PCA error versus number of rounds are plotted in Fig. \ref{UserNumber}(a) for a varying number of devices, $K = \{10, 20, 50\}$. Each device is provided with $10$ data samples randomly drawn from the dataset. Thus, the total data used in AirPCA/centralized-PCA  are proportional to the number of devices. We take the CIFAR-10 dataset for experiment with step-size $0.02$. For $K = \{20, 50\}$, the learning performance is better for larger number of devices. On the other hand, when the number is small (e.g., $K = 10$), SGD-based AirPCA fails to converge due to the joint effect of limited data and insufficient aggregation gain that suppresses channel noise [see \eqref{noise}]. In contrast, centralized  PCA using SVD does not encounter such a problem.  One possible solution to prevent the divergence is to reduce the  step-size in AirPCA at the cost of slowing down the convergence. 

\begin{figure}[t]
			\centering
			\subfigure[Effect of different number of devices.]{\includegraphics[height=5.6cm]{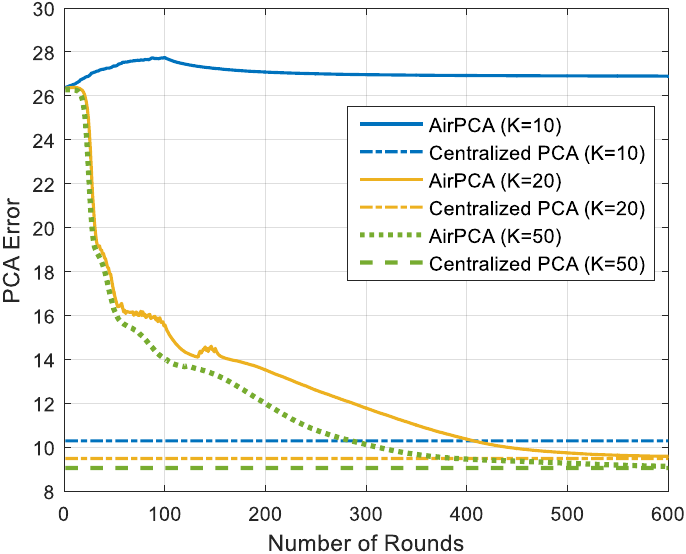}}
			\subfigure[Effect of different truncation thresholds.]{\includegraphics[height=5.6cm]{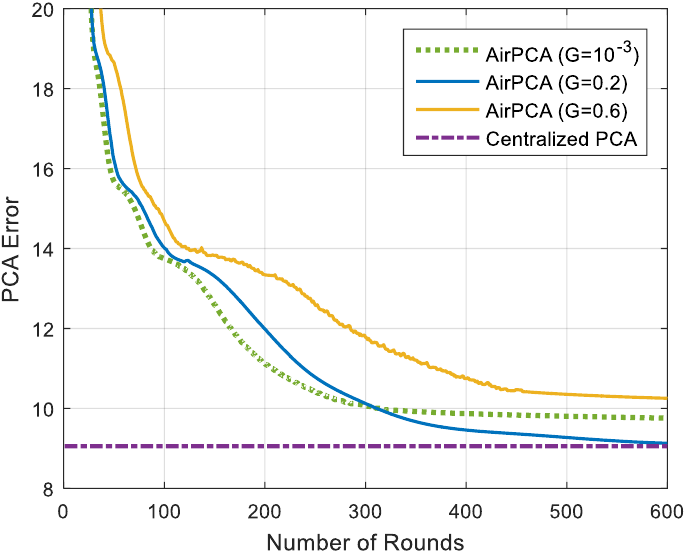}}
\caption{Effect of the number of devices and the truncation threshold on the learning performance of AirPCA with region-adaptive power control based on the CIFAR-10 dataset.}\label{UserNumber}\vspace{-0.3cm}
	\end{figure}

Next, we investigate the effect of channel-truncation threshold in \eqref{truncation}, $G$, on the learning performance of AirPCA with region-adaptive power control. To this end, 
the curves of PCA error versus number of rounds are plotted in Fig. \ref{UserNumber}(b) for a varying value of the truncation threshold $G = \{0.001, 0.2, 0.5\}$ for the CIFAR-10 dataset. Note that $G$ controls the expected ratio of truncated sub-channels. One can see that setting $G$ too small or too  large can result in  divergence. The former is due to too small receive signal power under the constraint of magnitude alignment across active sub-channels for over-the-air aggregation [see  \eqref{truncation}]; the latter is due to too many truncated sub-channels that severely distort the uploaded local gradients. This suggests the need of optimizing $G$,  for which finding a tractable approach is not obvious  but a topic warranting future work.

Define AirPCA latency as the required number of rounds to achieve the target PCA error relative to that of the ideal case of centralized PCA. To this end,  define the \emph{error ratio} $\varpi = \frac{\text{error for AirPCA}}{\text{error for centralized PCA}}-1$.  In Fig. \ref{TruncationChange}(a), we compare AirPCA latency for achieving different error ratios by varying the channel truncation threshold. It shows that the threshold being too large increases the latency, which is because deactivating  more  devices not only reduces the global  dataset used for AirPCA, but also results in weaker aggregation gain. The  results in  Fig. \ref{TruncationChange}(a) show the need of optimizing the threshold e.g., a truncation threshold in $(0, 0.2]$ is a preferred  choice. On the other hand, the effect of receive power used in the saddle regions on the AirPCA latency is also demonstrated  in Fig. \ref{TruncationChange}(b). It shows that higher  power slows  convergence, which is aligned with the finding in Theorem 2. Nevertheless, we can also see that too low receive power also leads to slow convergence. The reason is that strong noise perturbation  randomizes the gradient direction and can result in an undesired ascent direction.

\begin{figure}[t]
	\centering
	\subfigure[Effect of different truncation thresholds.]{\includegraphics[height=5.6cm]{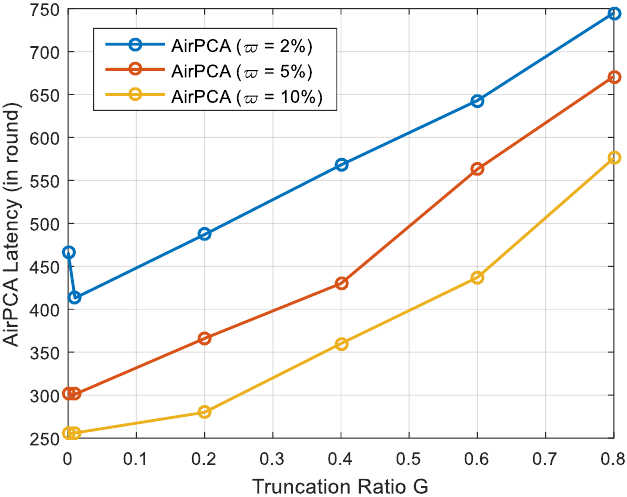}}
	\subfigure[Effect of different receive power for saddle regions.]{\includegraphics[height=5.6cm]{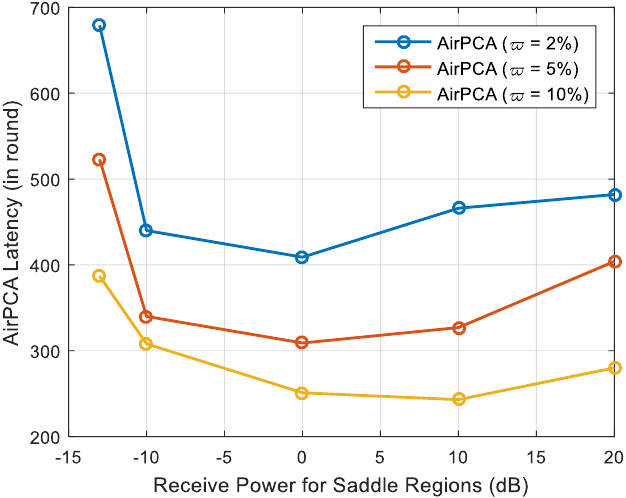}}
	\caption{AirPCA latency  for achieving different target PCA error ratios  with varying truncation thresholds $G$ and receive power for saddle regions. The CIFAR-10 dataset is considered with the step-size set as  $\mu=0.02$.}\label{TruncationChange}
\end{figure} 


\begin{figure}[t]
	\centering
	\subfigure[Comparison with the one-shot method in \cite{improvedPCA} on the total processing latency with $d=10$.]{\includegraphics[height=5.6cm]{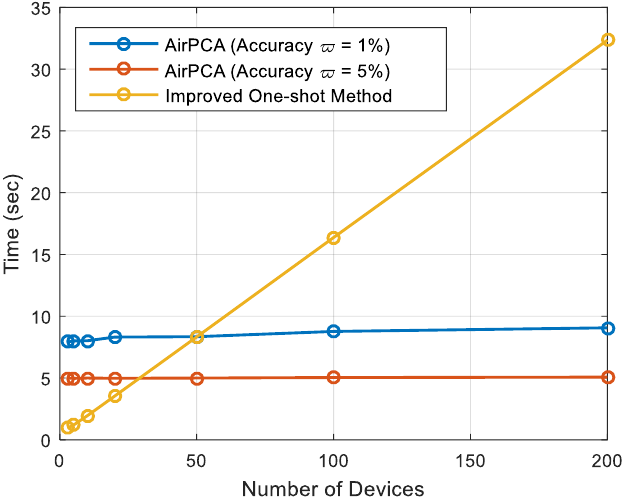}}
	\subfigure[Comparison with the power method in \cite{OTAPowerMethod} with SNR = $10$ dB.]{\includegraphics[height=5.6cm]{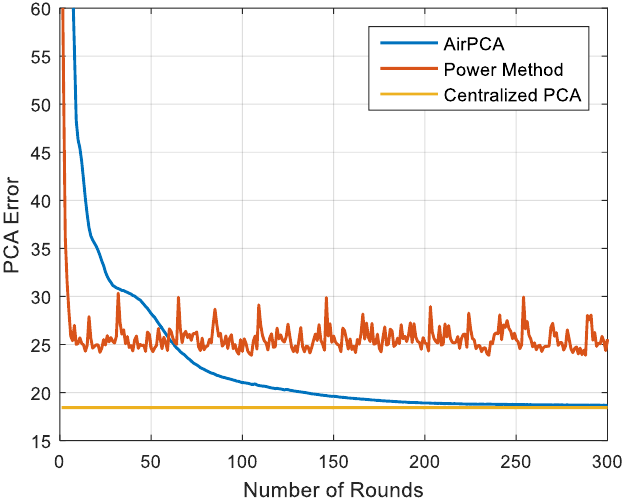}}
	\caption{The comparison between AirPCA and the state-of-the-art one-shot method in \cite{improvedPCA} and power method in \cite{OTAPowerMethod}. The CIFAR-10 dataset is used.}\label{CIFAR:Compare}\vspace{-0.3cm}
\end{figure}

Moreover, comparisons between the proposed AirPCA and the state-of-the-art one-shot method in \cite{improvedPCA} and power method in \cite{OTAPowerMethod} are also provided in Fig. \ref{CIFAR:Compare}. We assume each device acquires $30$ data samples in advance. The AirPCA and the power method feature negligible communication latency by applying over-the-air aggregation, while in the one-shot method we assume all the devices share a channel with constant transmission rate $8$ Mbits/s for local estimates uploading. Firstly, in Fig. \ref{CIFAR:Compare}(a) we show that the proposed AirPCA beats the one-shot method when the number of devices grows large, e.g., larger than $50$. The total processing latency of AirPCA remains to be $5\sim10$ seconds when the number of devices increases. The accuracy of the one-shot method is $\varpi < 1\%$ by adding $3$ redundant dimensions on the local subspace which helps suppress the biased error. On the other hand, in Fig. \ref{CIFAR:Compare}(b), we compare the proposed AirPCA with the power method on the convergence property, where SNR = $10$ dB and $30$ devices are involved. It clearly shows that the power method is sensitive to the noise perturbation while AirPCA guarantees the convergence to the global optimum.

Last but not least, in Fig. \ref{BatchSize}, we further show the effectiveness of the proposed AirPCA using the mini-batch approach at devices. In the current case, using  the CIFAR-10 dataset, AirPCA is performed  involving $20$ devices, each provisioned with  $30$ data samples. In each round, each device randomly selects  a mini-batch with a varying size  to compute  the local gradient. From Fig. \ref{BatchSize}, we can observe that though smaller mini-batch sizes result in slower convergence, they all lead to the same learning performance as the full-batch approach after convergence. The reason is that mini-batches generated by uniformly sampling the global dataset are representative of the latter's  distribution.

\begin{figure}
	\centering
	\includegraphics[height=5.6cm]{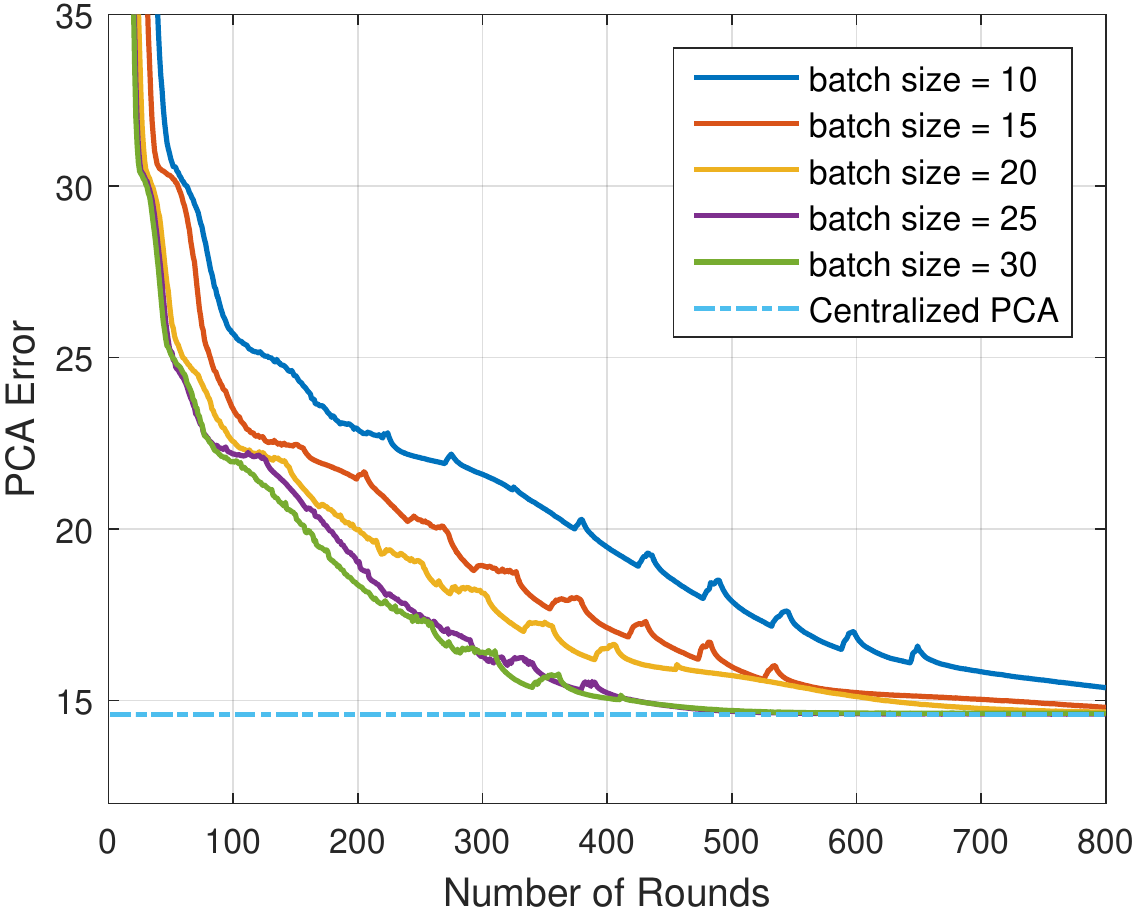}
	\caption{Effect of the batch size when applying AirPCA with SGD in each round.}\label{BatchSize}\vspace{-0.3cm}
\end{figure}

\section{Concluding Remarks}\label{Conclusion}
	
In this paper, we have proposed AirPCA that applies over-the-air FL to distributed PCA. Targeting the system, the key contribution of this paper is  the new idea of exploiting channel noise to accelerate convergence by escaping from saddle points. The idea has been materialized by designing an online power-control scheme featuring  descent-region awareness. While this work focuses on distributed PCA, the mentioned idea is general and useful for improving the performance of different types of Internet-of-Things and edge intelligence systems involving the operation of data aggregation such as distributed training of deep neural networks and distributed inference. Furthermore, the current AirPCA framework can be extended to advanced wireless techniques such as multi-antenna transmissions, millimeter-wave communications,  and interference limited systems. In particular, for the latter, the possibility of exploiting interference for accelerating convergence warrants investigation.

\section{Appendix}
\subsection{Proof of Theorem \ref{Compare}}\label{A}
	Given the real vectorized gradient $\vec{g}(\mathbf{W}_n)$ and following the gradient descent process in \eqref{Eq:Update} with the constraint on the step-size $\mu \le \frac{1}{\beta}$,
	\begin{align}\label{simpleExpression}
	&\mathsf{E}\left[F(\mathbf{W}_{n})-F(\mathbf{W}_{n+1})\right]\nonumber\\ \ge& -\vec{g}(\mathbf{W}_n)^T \mathsf{E}\left[\vec{\mathbf{w}}_{n+1}-\vec{\mathbf{w}}_n\right]-\frac{\beta}{2} \mathsf{E}\left[\|\vec{\mathbf{w}}_{n+1}-\vec{\mathbf{w}}_n\|^2\right],\nonumber\\
	=&\!-\!\vec{g}(\mathbf{W}_n)^T \mathsf{E}\!\left[\!-\mu(\vec{g}(\mathbf{W}_n)\!+\!\vec{\boldsymbol{\xi}}_n)\right]\!\!-\!\!\frac{\beta}{2} \mathsf{E}\!\left[\!\| \!\!-\!\mu(\vec{g}(\mathbf{W}_n)\!+\!\vec{\boldsymbol{\xi}}_n)\|^2\right],\nonumber\\
	=&(\mu-\frac{\beta\mu^2}{2})\|\vec{g}(\mathbf{W}_n)\|^2 - \frac{\mu^2\beta}{2}\mathsf{E}\left[\vec{\boldsymbol{\xi}}_n^T\vec{\boldsymbol{\xi}}_n\right],\nonumber\\
	\ge& \frac{1}{2}\mu\|\vec{g}(\mathbf{W}_n)\|^2 - \frac{\mu^2\beta}{2}\mathsf{E}\left[\vec{\boldsymbol{\xi}}_n^T\vec{\boldsymbol{\xi}}_n\right],
	\end{align}
	where $\vec{\mathbf{w}}$ is the real vectorization of matrix $\mathbf{W}$, and $\vec{\boldsymbol{\xi}}_n$ is the real vectorization of the data-plus-channel noise in \eqref{FianlGradient}. Using the inequality $\mathsf{E}[\boldsymbol{\Delta}_k\boldsymbol{\Delta}_k^H] \le \kappa^2 \mathbf{I}$ and $\mathsf{E}[\boldsymbol{\Delta}_k^H\boldsymbol{\Delta}_j] = \frac{1}{K-1}[\boldsymbol{\Delta}_k^H\sum\limits_{j \ne k}\boldsymbol{\Delta}_j] = \frac{1}{K-1}\boldsymbol{\Delta}_k^H (\sum\limits_{j}\boldsymbol{\Delta}_j-\boldsymbol{\Delta}_k) = -\frac{1}{K-1}\boldsymbol{\Delta}_k^H\boldsymbol{\Delta}_k < 0$, we further have 
	\begin{align}\label{signalStep}
	&\mathsf{E}\left[F(\mathbf{W}_{n})-F(\mathbf{W}_{n+1})\right]\nonumber\\ \ge &\frac{\mu}{2}\bigg(\| \vec{g}(\mathbf{W}_{n})\|^2 - \mu \beta \sum\limits_{i=1}^c\mathsf{E}\bigg[\frac{\kappa^2}{K_n^{(i)}}+\frac{\nu^2 \sigma^2}{(K_n^{(i)})^2 P^{\text{rx}}_n} \bigg]\bigg).
	\end{align}
	The number of active devices $K_n^{(i)}$ follows a binomial distribution $K_n^{(i)} \sim B(K,{\zeta^\text{act}})$ with ${\zeta^\text{act}}$ denoting the activation probability in \eqref{nonTruncation}. This leads to the following results:
	\begin{align}
	\Pr\big(K_n^{(i)} = k\big) = \dbinom{K}{k}{{\zeta^\text{act}} ^k}{(1 - {\zeta^\text{act}} )^{K - k}}.
	\end{align}
	It follows that
	\begin{align}\label{NoiseSimple}
	\mathsf{E}\bigg[\!\frac{\kappa^2}{K_n^{(i)}}\!+\!\frac{\sigma^2}{(K_n^{(i)})^2}\!\bigg] \!=\! \sum\limits_{k=1}^K\! \dbinom{K}{k}{{\zeta^\text{act}} ^k}{(1 \!-\! {\zeta^\text{act}} )^{K \!- k}}\!  \left[\!\frac{\kappa^2}{k}\!+\!\frac{\sigma^2}{k^2}\right].
	\end{align}
	Then based on the inequalities established in Lemma \ref{inequality} in Sec. \ref{sec:lem1}, we have
	\begin{align}\label{inequal}
	\mathsf{E}\bigg[ \sum\limits_{i=1}^c\bigg(\frac{\kappa^2}{K_n^{(i)}}+\frac{\sigma^2}{(K_n^{(i)})^2} \bigg)\bigg] \le c\left[\frac{2\kappa^2}{k}+\frac{6 \sigma^2}{k^2}\right].
	\end{align}
	By substituting \eqref{inequal} into \eqref{signalStep}, and considering that $\|\vec{g}(\mathbf{W}_n)\|^2 \ge \epsilon^2$ in $\mathcal{R}_\text{ns}$, we further have
	\begin{align}
	\mathsf{E}\left[F(\mathbf{W}_{\!n})\!-\!F(\mathbf{W}_{\!n+1})\right]
	\!\ge\!  {\mu^2}   \!\!\left[\!\frac{\epsilon^2}{2\mu} \!-\! \beta c\!\left(\! \frac{ \kappa^2}{K \!{\zeta^\text{act}}} \!+\! \frac{3\nu^2\sigma^2}{K^2{\zeta^\text{act}}^2 \! P^{\text{rx}}_{n}} \!\right) \right],
	\end{align}
	which directly gives \eqref{CompareRegion1} by telescoping over the first $n$ steps. This completes the proof.

	\subsection{Lemma 1 and Its Proof}\label{sec:lem1}
	\begin{lem}\label{inequality}
		The following two important inequalities hold:
		\begin{align}
		f(K,{\zeta^\text{act}}) &\equiv \sum\limits_{k=1}^K\ \frac{1}{k} \dbinom{K}{k}{{\zeta^\text{act}} ^k}{(1 - {\zeta^\text{act}} )^{K - k}} \le \frac{2}{K\!{\zeta^\text{act}}}.\nonumber\\
		h(K,{\zeta^\text{act}}) &\equiv \sum\limits_{k=1}^K \frac{1}{k^2} \dbinom{K}{k}{{\zeta^\text{act}} ^k}{(1 - {\zeta^\text{act}} )^{K - k}} \le \frac{6}{K^2{\zeta^\text{act}}^2}.
		\end{align} 
	\end{lem}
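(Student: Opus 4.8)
The plan is to recognize both $f(K,\zeta^\text{act})$ and $h(K,\zeta^\text{act})$ as expectations of $1/k$ and $1/k^2$ taken over a binomial variable $X\sim B(K,\zeta^\text{act})$ (the $k=0$ term simply being absent from the sums), and to compute the closely related \emph{shifted} expectations $\mathsf{E}[1/(X+1)]$ and $\mathsf{E}[1/((X+1)(X+2))]$ in closed form. Writing $p=\zeta^\text{act}$ for brevity, the engine of the whole argument is the pair of factorial identities
$$\frac{1}{k+1}\binom{K}{k}=\frac{1}{K+1}\binom{K+1}{k+1},\qquad \frac{1}{(k+1)(k+2)}\binom{K}{k}=\frac{1}{(K+1)(K+2)}\binom{K+2}{k+2},$$
which absorb the reciprocal factors into a binomial coefficient of larger index so that the sum can be reindexed and collapsed.

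For the first inequality I would begin with the elementary pointwise bound $\frac1k\le\frac{2}{k+1}$, valid for every $k\ge1$, giving $f(K,p)\le 2\sum_{k=0}^K\frac{1}{k+1}\binom{K}{k}p^k(1-p)^{K-k}=2\,\mathsf{E}\!\left[\frac{1}{X+1}\right]$ (enlarging the range to include the nonnegative $k=0$ term only increases the sum). Applying the first identity and shifting $j=k+1$ collapses this to $\frac{1}{(K+1)p}\bigl(1-(1-p)^{K+1}\bigr)$, which is at most $\frac{1}{(K+1)p}\le\frac{1}{Kp}$. Hence $f(K,p)\le\frac{2}{Kp}$.

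For the second inequality the same scheme applies with one further shift. The correct pointwise comparison here is $\frac{1}{k^2}\le\frac{6}{(k+1)(k+2)}$ for $k\ge1$, equivalently $(5k+2)(k-1)\ge0$; the constant $6$ is forced by the worst case $k=1$, where $(k+1)(k+2)/k^2$ attains its maximum $6$ and decreases monotonically thereafter. Then $h(K,p)\le 6\,\mathsf{E}\!\left[\frac{1}{(X+1)(X+2)}\right]$, and the second identity together with the reindexing $j=k+2$ yields $\frac{1}{(K+1)(K+2)p^2}\bigl(1-(1-p)^{K+2}-(K+2)p(1-p)^{K+1}\bigr)$, which is bounded by $\frac{1}{(K+1)(K+2)p^2}\le\frac{1}{K^2p^2}$. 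Combining gives $h(K,p)\le\frac{6}{K^2p^2}$.

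The only genuine subtlety, and the step I expect to need the most care, is pinning down the sharp pointwise constants $2$ and especially $6$ so that the final bounds match exactly: the shifted reciprocal must dominate $1/k$ (respectively $1/k^2$) uniformly over $k\ge1$, with the binding case at $k=1$. Everything else is bookkeeping, since the factorial identities are exact, the reindexed binomial sums evaluate to $1$ minus a few nonnegative tail terms, and the replacements $K+1\ge K$ and $(K+1)(K+2)\ge K^2$ are immediate.
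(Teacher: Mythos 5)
Your proof is correct and takes essentially the same route as the paper's: pointwise domination of $1/k$ and $1/k^2$ by shifted falling-factorial reciprocals with the sharp constants $2$ and $6$, followed by absorbing the shifts into binomial coefficients of larger upper index so that the sum collapses via the binomial theorem, arriving at the same intermediate bound $\frac{6}{(K+1)(K+2)(\zeta^\text{act})^2}\le\frac{6}{K^2(\zeta^\text{act})^2}$. The only cosmetic difference is that the paper first extracts a factor $K\zeta^\text{act}$ via $\binom{K}{k}=\frac{K}{k}\binom{K-1}{k-1}$ and then bounds $\frac{1}{(k+1)^3}\le\frac{6}{(k+1)(k+2)(k+3)}$ (worst case $k=0$), whereas you apply the comparison $\frac{1}{k^2}\le\frac{6}{(k+1)(k+2)}$ directly (worst case $k=1$) and evaluate the shifted binomial expectation in closed form.
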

	
	\begin{proof}
		Here we prove only the second inequality as the first one can be derived in the same way.
		Function $h(K, {\zeta^\text{act}})$ can be rewritten as
		\begin{align}
		&h(K, {\zeta^\text{act}})\nonumber\\
		= &K{\zeta^\text{act}}  \sum\limits_{k=1}^K \frac{1}{k^3} \binom{K-1}{k-1}{{\zeta^\text{act}}^{k-1}}{(1 - {\zeta^\text{act}} )^{K-k-1}},\nonumber\\
		= &K{\zeta^\text{act}}  \sum\limits_{k=0}^{K-1} \frac{1}{(k+1)^3} \binom{K-1}{k}{{\zeta^\text{act}}^{k}}{(1 - {\zeta^\text{act}} )^{K-k-1}},\nonumber\\
		= &K{\zeta^\text{act}}  \sum\limits_{k=0}^{K-1} \frac{k+3}{k+1}\cdot\frac{k+2}{k+1}\cdot\frac{1}{(k+1)(k+2)(k+3)}\nonumber\\ 
		&\qquad\qquad \cdot \binom{K-1}{k}{{\zeta^\text{act}}^{k}}{(1 - {\zeta^\text{act}} )^{K-k-1}}.\nonumber
		\end{align}
		Since $\frac{k+3}{k+1} \le 3$ and $\frac{k+2}{k+1} \le 2$, the function $h(K, {\zeta^\text{act}})$ can be bounded by
		\begin{align}
		h(K, {\zeta^\text{act}}) \le &\frac{6K{\zeta^\text{act}}}{K(K+1)(K+2){\zeta^\text{act}}^3}\nonumber\\
		&\times \sum\limits_{k=0}^{K-1}\binom{K+2}{k+3}{{\zeta^\text{act}}^{k+3}}{(1 - {\zeta^\text{act}} )^{K-k-1}},\nonumber\\
		<&\frac{6K{\zeta^\text{act}}}{K(K+1)(K+2){\zeta^\text{act}}^3}\nonumber\\
		&\times \underbrace{\sum\limits_{k=-3}^{K-1}\binom{K+2}{k+3}{{\zeta^\text{act}}^{k+3}}{(1 - {\zeta^\text{act}} )^{K-k-1}}}_{=1}, \nonumber\\
		=&\frac{6}{(K+1)(K+2){\zeta^\text{act}}^2}< \frac{6}{K^2{\zeta^\text{act}}^2}.
		\end{align}
		This finishes the proof.
	\end{proof}

	\subsection{Proof of Theorem \ref{saddle}}\label{B}
	Given that $\mathbf{W}_n \in \mathcal{R}_\text{sa}$, and according to the $\chi$-Lipschitz Hessian in \eqref{Hessian}, we have
	\begin{align*}
	&F(\mathbf{W}_{n+1})\!-\!F(\mathbf{W}_{n})\nonumber\\
	\le&\vec{g}(\mathbf{W}_{\!n})^{\!T}(\vec{\mathbf{w}}_{n+1}\!\!-\!\vec{\mathbf{w}}_n)\!+\!\frac{1}{2}(\vec{\mathbf{w}}_{n+1}\!\!-\!\vec{\mathbf{w}}_n)^{\!T}\mathcal{H}(\mathbf{W}_{\!n})(\vec{\mathbf{w}}_{n+1}\!\!-\!\vec{\mathbf{w}}_n)\nonumber\\
	&+\frac{\chi}{6}\|\vec{\mathbf{w}}_{n+1}\!-\!\vec{\mathbf{w}}_n\|^3,\nonumber\\
	&F(\mathbf{W}_{n+1})\!-\!F(\mathbf{W}_{n})\nonumber\\
	\ge&\vec{g}(\mathbf{W}_{\!n})^{\!T}(\vec{\mathbf{w}}_{n+1}\!\!-\!\vec{\mathbf{w}}_n)\!+\!\frac{1}{2}(\vec{\mathbf{w}}_{n+1}\!-\!\vec{\mathbf{w}}_n)^{\!T}\mathcal{H}(\mathbf{W}_{\!n})(\vec{\mathbf{w}}_{n+1}\!\!-\!\vec{\mathbf{w}}_n)\nonumber\\
	&-\frac{\chi}{6}\|\vec{\mathbf{w}}_{n+1}\!-\!\vec{\mathbf{w}}_n\|^3,
	\end{align*}
	where $\vec{g}(\mathbf{W}_n)$ is the real vectorized gradient and $\vec{\mathbf{w}}$ is the real vectorization of matrix $\mathbf{W}$. Note that the last term is ${O}(\mu^3\|\vec{\boldsymbol{\xi}}_n\|^3)$, which is negligible compared to the first two terms with sufficiently small step-size $\mu \ll 1/c \left[ \frac{\kappa^2}{K{\zeta^\text{act}}}+\frac{3\nu^2  \sigma^2}{K^2{\zeta^\text{act}}^2 P^\text{rx}_{\min}} \right]$. In this case, we have
	\begin{align}\label{taylor}
	&F(\mathbf{W}_{n+1})\!-\!F(\mathbf{W}_{n}) \nonumber\\
	\rightarrow&\vec{g}(\mathbf{W}_{\!n})^{\!T}\!(\vec{\mathbf{w}}_{n+1}\!\!-\!\vec{\mathbf{w}}_n)\!+\!\frac{1}{2}(\vec{\mathbf{w}}_{n+1}\!\!-\!\vec{\mathbf{w}}_n)^{\!T}\!\mathcal{H}(\mathbf{W}_{\!n})(\vec{\mathbf{w}}_{n+1}\!\!-\!\vec{\mathbf{w}}_n),
	\end{align}
	which means we can treat $F(\mathbf{W})$ as a locally quadratic function with negligible deviation.
	Denote $\boldsymbol{\mathcal{H}} = \mathcal{H}(\mathbf{W}_0)$ as the Hessian matrix at $\mathbf{W}_0$. It follows that
	\begin{align}\label{GraTaylor}
	\vec{g}(\mathbf{W}_n) \rightarrow &\vec{g}(\mathbf{W}_0) + \boldsymbol{\mathcal{H}}(\mathbf{W}_n-\mathbf{W}_0) \nonumber\\
	= &(\mathbf{I}\!-\!\mu\boldsymbol{\mathcal{H}})^n \vec{g}(\mathbf{W}_0)\!-\!\mu\boldsymbol{\mathcal{H}}\sum\limits_{{m}=0}^{n-1}(\mathbf{I}\!-\!\mu\boldsymbol{\mathcal{H}})^{n-{m}-1}\vec{\boldsymbol{\xi}}_{m},
	\end{align}
	and
	\begin{align}\label{DisTaylor}
	&\vec{\mathbf{w}}_n \!-\! \vec{\mathbf{w}}_0\nonumber\\
	=&\!-\mu\!\sum\limits_{{m}=0}^{n-1}\!\left(\vec{g}(\mathbf{W}_{m})\!+\!\vec{\boldsymbol{\xi}}_n \right),\nonumber\\
	\rightarrow&\!-\!\mu\!\!\sum\limits_{{m}=0}^{n-1}\!\!\!\bigg(\!\!(\mathbf{I}\!\!-\!\!\mu\boldsymbol{\mathcal{H}})^{m} \vec{g}(\mathbf{W}_0)\!-\!\mu\boldsymbol{\mathcal{H}}\!\!\!\sum\limits_{{m}^\prime = 0}^{{m}\!-\!1}\!\!(\mathbf{I}\!\!-\!\!\mu\boldsymbol{\mathcal{H}})^{{m} \!-\! {m}^\prime\!-\!1}\vec{\boldsymbol{\xi}}_{{m}^\prime} \!\!+\! \vec{\boldsymbol{\xi}}_{m} \!\!\bigg),\nonumber\\
	=& -\mu\sum\limits_{{m}=0}^{n-1}(\mathbf{I}-\mu\boldsymbol{\mathcal{H}})^{m} \vec{g}(\mathbf{W}_0)-\mu\sum\limits_{{m}=0}^{n-1}(\mathbf{I}-\mu\boldsymbol{\mathcal{H}})^{t-{m}-1}\vec{\boldsymbol{\xi}}_{m}.
	\end{align}
	
	Combining \eqref{taylor}, \eqref{GraTaylor} and \eqref{DisTaylor} gives
	\begin{align}
	&\mathsf{E}[F(\mathbf{W}_{0})- F(\mathbf{W}_{n})]\nonumber\\
	=&\frac{\mu}{4}\!\sum\limits_{i=1}^{2c-1}\!\sum\limits_{{m}=0}^{2n-1}\!(1\!-\!\mu\lambda_i)^{m}|[\vec{g}(\mathbf{W}_0)]_i|^2 \nonumber\\
	&\!-\!\!\frac{\mu^2}{4}\!\!\sum\limits_{i=1}^{2c-1}\!\!\lambda_i\!\! \sum\limits_{{m}=0}^{n-1}\!(1\!\!-\!\!\mu\lambda_i)^{\!2(t\!-\!{m}\!-\!1)}\mathsf{E}\!\left[\!{[\vec{\boldsymbol{\xi}}_{m}]_i}^{\!\!\!T}\! [\vec{\boldsymbol{\xi}}_{m}]_i\!\right]\!\! -\!\! \sum\limits_{{m}=0}^{n-1}\!\!\mu^3\|\vec{\boldsymbol{\xi}}_{m}\|^3,\nonumber\\
	\label{increasing}
	\ge& \!-\!\!\frac{1}{4}\!\!\sum\limits_{i=1}^{2c-1}\!\!\lambda_i\!\! \sum\limits_{{m}=0}^{n-1}\!(1\!-\!\mu\lambda_i)^{2(n\!-\!{m}\!-\!1)}\!\mu^2\mathcal{V}_{m}^{(i)}\!-\!\!\sum\limits_{{m}=0}^{n-1}\!\mu^3\|\vec{\boldsymbol{\xi}}_{m}\|^3, \\ 
	\label{R2proof}
	\ge& \frac{1}{4}  \sum\limits_{{m}=0}^{n-1}\mu^2\gamma(1\!+\!\mu\gamma)^{2(n-{m}-1)}\frac{\nu^2 \sigma^2}{K^2 P^\text{rx}_m}\nonumber\\
	&\!-\!\frac{\mu(2c\!-\!1)}{4}\!\left[\!\frac{\kappa^2}{K\zeta^\text{act}}\!+\!\frac{3\nu^2  \sigma^2}{K^2{\zeta^\text{act}}^2 P^{\text{rx}}_{\min}} \right] \!\!-\!\! \sum\limits_{{m}=0}^{n-1}\!\!\mu^3\|\vec{\boldsymbol{\xi}}_{m}\|^3,
	\end{align}
	where $\mathcal{V}_{m}^{(i)} = \mathsf{E}\left[{[\vec{\boldsymbol{\xi}}_{m}]_i}^T [\vec{\boldsymbol{\xi}}_{m}]_i\right]$ is the noise variance on the $i$-th element in the received gradient. Note that in \eqref{increasing} we have $\mu\lambda_i \sum\limits_{{m}=0}^{n-1}(1-\mu\lambda_i)^{2(n-{m}-1)} \le 1$ for $\lambda_i \ge 0$, and $\sum\limits_{{m}=0}^{n-1}(1-\mu\lambda_i)^{2(n-{m}-1)}$ is monotonically increasing w.r.t. $n$ for $\lambda_i < 0$. Then the last inequality can be obtained by using $2\mathsf{E}\left[{[\vec{\boldsymbol{\xi}}_{m}]_i}^T [\vec{\boldsymbol{\xi}}_{m}]_i\right] \le \frac{\kappa^2}{K\zeta^\text{act}}+\frac{3\nu^2  \sigma^2}{K^2{\zeta^\text{act}}^2 P^{\text{rx}}_{\min}}$ and $2\mathsf{E}\left[{[\vec{\boldsymbol{\xi}}_{m}]_i}^T [\vec{\boldsymbol{\xi}}_{m}]_i\right] \ge \frac{\nu^2 \sigma^2}{K^2 P^\text{rx}_m}$.	Furthermore, by setting the two constants $\mathcal{V}_{\max} =  \frac{\kappa^2}{K\zeta^\text{act}}+\frac{3\nu^2  \sigma^2}{K^2{\zeta^\text{act}}^2 P^{\text{rx}}_{\min}}$, $\mathcal{V}_{\min} =  \frac{\kappa^2}{K}+\frac{\nu^2  \sigma^2}{K^2 P^{\text{rx}}_{\max}}\le \frac{\nu^2 \sigma^2}{K^2 P^{\text{rx}}_{m}}$, \eqref{R2proof} can be further bounded by
	\begin{align}\label{Expectation_L2}
	&\mathsf{E}[F(\mathbf{W}_{0})- F(\mathbf{W}_{n})] \nonumber\\
	\ge& \frac{1}{4}  \sum\limits_{{m}=0}^{n-1}\mu^2\gamma(1+\mu\gamma)^{2(n-{m}-1)} \frac{\nu^2 \sigma^2}{K^2 P^\text{rx}_m}\nonumber\\
	&-\underbrace{\frac{\mu(2c-1)}{4}\left[\frac{\kappa^2}{K\zeta^\text{act}}+\frac{3\nu^2  \sigma^2}{K^2{\zeta^\text{act}}^2 P^{\text{rx}}_{\min}} \right] - \sum\limits_{{m}=0}^{n-1}\mu^3\|\vec{\boldsymbol{\xi}}_{m}\|^3}_{O(\mu)}, \nonumber\\
	\ge&\!-\!\frac{\mu\mathcal{V}_{\min}}{4}\!\bigg(\!(2c\!-\!1)\frac{\mathcal{V}_{\max}}{\mathcal{V}_{\min}} \mu\gamma\!\sum\limits_{{m}=0}^{n-1}\!(1\!+\!\mu\gamma)^{2{m}} \bigg) \!-\! n\!\cdot\!{O}(\mu^3).
	\end{align}
	With step-size $\mu \ll 1/c \left[ \frac{\kappa^2}{K{\zeta^\text{act}}}+\frac{3\nu^2  \sigma^2}{K^2{\zeta^\text{act}}^2 P^{\text{rx}}_{\min}} \right]$, the second term $n\cdot O(\mu^3)$ is negligible to other terms. Then according to \eqref{Expectation_L2}, we look for a $N_{\max}$ that enables $2c\frac{\mathcal{V}_{\max}}{\mathcal{V}_{\min}} \! \le\! \mu\gamma\!\!\sum\limits_{{m}=0}^{N_{\max}\!-1}\!(1\!+\!\mu\gamma)^{2{m}}$, where a sufficient condition is
	\begin{align*}
	&2c\frac{\mathcal{V}_{\max}}{\mathcal{V}_{\min}} \!\le\! \frac{(1\!\!+\!\!\mu\gamma)^{2N_{\max}}\!-\!1}{3}\!\!\iff\!\! 6c\frac{\mathcal{V}_{\max}}{\mathcal{V}_{\min}}\!\!+\!\!1 \!\le\! (1\!\!+\!\!\mu\gamma)^{\!\frac{2\mu\gamma N_{\max}}{\mu\gamma}}\nonumber\\
	&\iff\! N_{\max} \!\ge\! \frac{\log (6c\frac{\mathcal{V}_{\max}}{\mathcal{V}_{\min}}\!+\!1)}{2\mu\gamma}.
	\end{align*}
	Therefore, by taking $N_{\max} = \frac{\log (6c\frac{\mathcal{V}_{\max}}{\mathcal{V}_{\min}}+1)}{2\mu\gamma}$, we can simplify \eqref{Expectation_L2} as
	\begin{align}\label{L2guarantee}
	&\mathsf{E}[F(\mathbf{W}_{0})-F(\mathbf{W}_{N_{\max}})]\nonumber\\
	\ge&\!-\!\frac{\mu\mathcal{V}_{\min}}{4}\!\bigg(\!(2c\!-\!1)\frac{\mathcal{V}_{\max}}{\mathcal{V}_{\min}} \!-\!\mu\gamma\!\!\!\!\sum\limits_{{m}=0}^{N_{\max}\!-\!1}\!\!\!(1\!+\!\mu\gamma)^{2{m}} \!\bigg)\!\!-\!\! N_{\max}\!\cdot\!{O}(\mu^3), \nonumber\\
	\ge& \frac{\mu\mathcal{V}_{\max}}{4}- {O}(\mu^2) \rightarrow \frac{\mu\mathcal{V}_{\max}}{4}.
	\end{align}
	Combining \eqref{L2guarantee} and \eqref{R2proof} gives
	\begin{align}
	&\mathsf{E}[F(\mathbf{W}_{0})- F(\mathbf{W}_{n})]\nonumber\\ 
	\ge &\frac{\mu^2\gamma}{4}   \sum\limits_{m=0}^{n\!-\!N_{\max}\!-\!1}(1+\mu\gamma)^{2(n-m-1)}\frac{\nu^2 \sigma^2}{K^2 P^\text{rx}_m} +\frac{\mu\mathcal{V}_{\max}}{4},
	\end{align}
	for $n \ge N_{\max}$. The result in \eqref{L2_approx} directly follows.   
	Moreover, since $N_{\max} = \frac{\log (6c\frac{\mathcal{V}_{\max}}{\mathcal{V}_{\min}}+1)}{2\mu\gamma}$, then with unbiased noise bounded by $\|\vec{\boldsymbol{\xi}}_n\|\le {O}(1)$ with probability $1$. According to the Hoeffding inequality,
	\begin{align}
	\!\!\!\Pr\!\bigg(\!\!\|\mu\!\!\sum\limits_{{m}=0}^{n-1}\!(1\!\!-\!\!\mu\lambda_i)^{n\!-\!{m}\!-\!1}[\vec{\boldsymbol{\xi}}_{{m}}]_i\|\!\!>\!16c\frac{\mathcal{V}_{\max}}{\mathcal{V}_{\min}}O\!\left( \!\!\sqrt{\!\mu\log \!\frac{1}{\mu}}\right)\!\!\!  \bigg) \!\!&\le\! \mu^2, \nonumber\\
	\forall n \le N_{\max}. &
	\end{align}
	By summing over dimension $c$ and taking union bound over all $n\le N_{\max}$, it follows that
	\begin{align}\label{stable}
	&\Pr \!\bigg(\!\forall n\!\le\! N_{\max}, \|\mu\!\!\sum\limits_{{m}=0}^{n-1}\!(\mathbf{I}\!-\!\mu\boldsymbol{\mathcal{H}})^{n\!-\!{m}\!-\!1}\vec{\boldsymbol{\xi}}_{m}\|\!>\!{O}\bigg(\!\!\sqrt{\mu\log \!\frac{1}{\mu}}\bigg) \!\!\bigg) \nonumber\\
	\le &{O}(\mu).
	\end{align}
	Note that $\mu$ can be chosen such that $16c\frac{\mathcal{V}_{\max}}{\mathcal{V}_{\min}}\!\sqrt{\!\mu\log \frac{1}{\mu}} \!\le\! \epsilon$. Then combine \eqref{stable} with \eqref{GraTaylor} and \eqref{DisTaylor},
	\begin{align}\label{DisNmax}
	\|\mathbf{W}_n -\mathbf{W}_0\|\le {O}(\mu^{\frac{1}{2}}\log \frac{1}{\mu})\le \epsilon, \quad \forall n\le N_{\max},\nonumber\\
	\|\vec{g}(\mathbf{W}_n) -\vec{g}(\mathbf{W}_0)\|\le {O}(\mu^{\frac{1}{2}}\log \frac{1}{\mu}), \quad \forall n\le N_{\max},
	\end{align}
	with probability as least $1-{O}(\mu)$. The results in \eqref{DisNmax} indicates that the distance $\|\mathbf{W}_n-\mathbf{W}_0\|$
	keeps to be small in the $n$-round process, which also validates the Taylor-approximation here.

	\subsection{Proof of Theorem \ref{SurelyConverge}}\label{D}
	Firstly, consider the descent in the non-stationary region. According to \eqref{signalStep}, given a step-size $\mu \le \frac{\epsilon^2}{4\beta c} \left[ \frac{ \kappa^2}{K {\zeta^\text{act}}} + \frac{3\nu^2 \sigma^2}{K^2{\zeta^\text{act}}^2 P^{\text{rx}}_{\min}} \right]^{-1}$,
	\begin{align}
	&\mathsf{E}[F(\mathbf{W}_{n})-F(\mathbf{W}_{n+1})|\mathbf{W}_n \in \mathcal{R}_\text{ns}] \nonumber\\
	\ge&{{\mu^2\beta} \sum\limits_{i=1}^c\left(\frac{\kappa^2}{K\zeta^\text{act}}+\frac{3\nu^2 \sigma^2}{K^2{\zeta^\text{act}}^2 P^{\text{rx}}_{\min}} \right)} = \mu^2\beta c \mathcal{V}_{\max}.
	\end{align}
	It follows that
	\begin{align}\label{descent_R1}
	&\mathsf{E}[F(\mathbf{W}_{n})-F(\mathbf{W}_{n+N_{\max}})|\mathbf{W}_n \in \mathcal{R}_\text{ns}] \nonumber\\
    \ge&\frac{\mu}{2} \mathcal{V}_{\max} \left[\frac{\beta c}{\gamma} \log(6c\frac{\mathcal{V}_{\max}}{\mathcal{V}_{\min}}+1) \right],
	\end{align}
	where $N_{\max} = \frac{\log (6c\frac{\mathcal{V}_{\max}}{\mathcal{V}_{\min}}+1)}{2\mu\gamma}$ is a constant defined in Theorem \ref{saddle}.
	Secondly, consider the descent process in the a saddle region. According to Theorem \ref{saddle}, with a step-size satisfying \eqref{stepsizeL2},
	\begin{align}\label{descent_R2}
	\mathsf{E}[F(\mathbf{W}_{n}) - F(\mathbf{W}_{n+N_{\max}})|\mathbf{W}_n \in \mathcal{R}_\text{sa}] \ge \frac{\mu}{4}\mathcal{V}_{\max}.
	\end{align}
	By combining \eqref{descent_R1} and \eqref{descent_R1},
	\begin{align}
	\mathsf{E}[F(\mathbf{W}_{n}) - F(\mathbf{W}_{n+N_{\max}})|\mathbf{W}_n \notin \mathcal{R}_\text{op}] \ge \frac{\mu \rho}{4}\mathcal{V}_{\max},   
	\end{align}
	where $\rho = \min\{\frac{2\beta c}{\gamma} \log(6c\frac{\mathcal{V}_{\max}}{\mathcal{V}_{\min}}+1), 1 \}$ is a constant.	
	Define event $\mathcal{E}_n = \{\exists j\le n, \mathbf{W}_n \in \mathcal{R}_\text{op} \}$, clearly $\mathcal{E}_n \subset \mathcal{E}_{n+N_{\max}}$, thus $\Pr(\mathcal{E}_n) \le \Pr(\mathcal{E}_{n+N_{\max}})$. Finally, consider $F(\mathbf{W}_{n+N_{\max}})\mathsf{1}_{\mathcal{E}_n}$, where $\mathsf{1}_{\mathcal{E}_n}$ is an indicator function which is $1$ when event $\mathcal{E}_n$ is true and $0$ otherwise. Then we have
	\begin{align}\label{prob}
	&\mathsf{E}[F(\mathbf{W}_{n+N_{\max}})\mathsf{1}_{\mathcal{E}_n}] - \mathsf{E}[F(\mathbf{W}_{n})\mathsf{1}_{\mathcal{E}_{n-N_{\max}}}]\nonumber\\
	 \le& B \cdot [\Pr(\mathcal{E}_n)-\Pr(\mathcal{E}_{n-N_{\max}})]\nonumber\\
	 &+ \mathsf{E}[F(\mathbf{W}_{n+N_{\max}}) - F(\mathbf{W}_{n}) \big| \bar{\mathcal{E}}_n]\cdot \Pr(\bar{\mathcal{E}}_n),\nonumber\\
	\le& B \cdot [\Pr(\mathcal{E}_n)-\Pr(\mathcal{E}_{n-N_{\max}})] - \frac{\mu \rho}{4}\mathcal{V}_{\max}\cdot \Pr(\bar{\mathcal{E}}_n),
	\end{align}
	where $B$ is the upper-bound on the error function norm, given as $\|F(\mathbf{W})\|\le B$ for all $\mathbf{W}$. The term $\Pr(\mathcal{E}_n)-\Pr(\mathcal{E}_{n-N_{\max}})$ denotes the probability that the descent process enters $\mathcal{R}_\text{op}$ for the first time between the $(n-N_{\max})$-th  and the $n$-th round, and $\Pr(\bar{\mathcal{E}}_n)$ denotes the probability that the descent process never enters into $\mathcal{R}_\text{op}$ in the first $n$ rounds. Therefore, by summing up \eqref{prob} over a long period $m \cdot N_{\max}$ with $m\in\mathds{N}^+$,
	\begin{align}
	&\mathsf{E}[F(\mathbf{W}_{mN_{\max}})\mathsf{1}_{\mathcal{E}_{(m-1)N_{\max}}}] - F(\mathbf{W}_{0})\nonumber\\ \le& B \cdot \Pr(\mathcal{E}_{(m-1)N_{\max}}) - \frac{ \mu \rho}{4}\mathcal{V}_{\max}\cdot \sum\limits_{i=1}^m\Pr(\bar{\mathcal{E}}_{(i-1)N_{\max}}),\nonumber\\
	\le& B - \frac{m \mu \rho}{4}\mathcal{V}_{\max}\cdot \Pr(\bar{\mathcal{E}}_{(m-1)N_{\max}}).
	\end{align}
	Since $\|F(\mathbf{W}_{mN_{\max}})\mathsf{1}_{\mathcal{E}_{(m-1)N_{\max}}}\| \le B$ is bounded, we have 
	\begin{align}
	\Pr(\bar{\mathcal{E}}_{(m-1)N_{\max}}) \le \frac{12B}{m \mu \rho \mathcal{V}_{\max}},
	\end{align}
	which gives Theorem \ref{SurelyConverge}.
	Based on the above conclusion, it is also obvious that the process enters $\mathcal{R}_\text{op}$ at least once with probability $1$ when $m \to \infty$, i.e., $n \to \infty$. This finishes the proof.

	\bibliographystyle{ieeetr}
	\bibliography{Ref}

\end{document}